\def\blfootnote{\xdef\@thefnmark{}\@footnotetext}
\newcommand{\removed}[1]{}
\newcommand{\rem}{\mathbf}
\newcommand{\cu}{\mathcal{U}}
\newcommand{\ca}{\mathcal{A}}
\newcommand{\cc}{\mathcal{C}}
\newcommand{\cl}{\mathcal{L}}
\newcommand{\ra}{\rightarrow}
\newcommand{\rsa}{\rightsquigarrow}
\newcommand{\coleq}{\mathrel{\mathop:}=}
\newcommand{\bs}{\backslash}
\newcommand{\dprime}{{\prime\prime}}
\renewcommand{\P}{\textup{P}}
\definecolor{DarkRed}{RGB}{182,11,1}
\title{Terminating Distributed Construction of Shapes and Patterns in a Fair Solution of Automata\thanks{Supported in part by the project ``Foundations of Dynamic Distributed Computing Systems'' (\textsf{FOCUS}) which is implemented under the ``ARISTEIA'' Action of the  Operational Programme ``Education and Lifelong Learning'' and is co-funded by the European Union (European Social Fund) and Greek National Resources.}}
\author{Othon Michail}
\institute{Computer Technology Institute \& Press ``Diophantus'' (CTI), Patras, Greece\\
Email:\email{ michailo@cti.gr}\\ Phone: +30 2610 960300}
\begin{document}

\maketitle

\begin{abstract}
In this work, we consider a \emph{solution of automata} similar to \emph{Population Protocols} and \emph{Network Constructors}. The automata (also called \emph{nodes}) move passively in a well-mixed solution without being capable of controlling their movement. However, the nodes can \emph{cooperate} by interacting in pairs. Every such interaction may result in an update of the local states of the nodes. Additionally, the nodes may also choose to connect to each other in order to start forming some required structure. We may think of such nodes as the \emph{smallest possible programmable pieces of matter}, like tiny nanorobots or programmable molecules. The model that we introduce here is a more applied version of Network Constructors, imposing \emph{physical} (or \emph{geometrical}) \emph{constraints} on the connections that the nodes are allowed to form. Each node can connect to other nodes only via a very limited number of \emph{local ports}, which implies that at any given time it has only a \emph{bounded number of neighbors}. Connections are always made at \emph{unit distance} and are \emph{perpendicular to connections of neighboring ports}. Though such a model cannot form abstract networks like Network Constructors, it is still capable of forming very practical \emph{2D or 3D shapes}. We provide direct constructors for some basic shape construction problems, like \emph{spanning line}, \emph{spanning square}, and \emph{self-replication}. We then develop \emph{new techniques} for determining the computational and constructive capabilities of our model. One of the main novelties of our approach, concerns our attempt to overcome the inability of such systems to detect termination. In particular, we exploit the assumptions that the system is well-mixed and has a unique leader, in order to \emph{give terminating protocols that are correct with high probability}. This allows us to develop terminating subroutines that can be \emph{sequentially composed} to form larger \emph{modular protocols} (which has not been the case in the relevant literature). One of our main results is a \emph{terminating protocol counting the size $n$ of the system} with high probability. We then use this protocol as a subroutine in order to develop our \emph{universal constructors}, establishing that \emph{it is possible for the nodes to become self-organized with high probability into arbitrarily complex shapes while still detecting termination of the construction}.
\end{abstract}

\noindent
\textbf{Keywords:} distributed network construction, programmable matter, shape formation, well-mixed solution, homogeneous population, distributed protocol, interacting automata, fairness, random schedule, structure formation, self-organization, self-replication

\section{Introduction}
\label{sec:intro}

Recent research in distributed computing theory and practice is taking its first timid steps on the pioneering endeavor of investigating the possible \emph{relationships of distributed computing systems to physical and biological systems}. The first main motivation for this is the fact that a wide range of physical and biological systems are governed by underlying laws that are essentially \emph{algorithmic}. The second is that the higher-level physical or behavioral properties of such systems are usually the outcome of the coexistence, which may include both cooperation and competition, and constant interaction of \emph{very large numbers of relatively simple distributed entities} respecting such laws. This effort, to the extent that its perspective allows, is expected to promote our understanding on the algorithmic aspects of our (distributed) natural world and to develop innovative artificial systems inspired by them.

Ulam's and von Neuman's Cellular Automata (cf. e.g. \cite{Sc11}), essentially a distributed grid network of automata, have been used as models for self-replication, for modeling several physical systems (e.g. neural activity, bacterial growth, pattern formation in nature), and for understanding emergence, complexity, and self-organization issues. Population Protocols of Angluin \emph{et al.} \cite{AADFP06} were originally motivated by highly dynamic networks of simple sensor nodes that cannot control their mobility. Recently, Doty \cite{Do14} demonstrated their formal equivalence to \emph{chemical reaction networks} (CRNs), which model chemistry in a \emph{well-mixed solution}. Moreover, the \emph{Network Constructors} extension of population protocols \cite{MS14}, showed that a population of finite-automata that interact randomly like molecules in a well-mixed solution and that can establish bonds with each other according to the rules of a common small protocol, can construct arbitrarily complex stable networks \cite{MS14}. In the young area of DNA self-assembly it has been already demonstrated that it is possible to fold long, single-stranded DNA molecules into arbitrary nanoscale two-dimensional shapes and patterns \cite{Ro06}. Recently, an interesting theoretical model was proposed, the \emph{Nubot} model, for studying the complexity of self-assembled structures with active molecular components \cite{WCG13}. This model is inspired by biology's fantastic ability to assemble biomolecules that form systems with complicated structure and dynamics, from molecular motors that walk on rigid tracks and proteins that dynamically alter the structure of the cell during mitosis, to embryonic development where large-scale complicated organisms efficiently grow from a single cell.  Also recently a system was reported that demonstrates programmable self-assembly of complex two-dimensional shapes with a thousand-robot swarm, called the \emph{Kilobot} \cite{RCN14}. This was enabled by creating small, cheap, and simple autonomous robots designed to operate in large groups and to cooperate through local interactions and by developing a collective algorithm for shape formation that is highly robust to the variability and error characteristic of large-scale decentralized systems.

The established and ongoing research seems to have opened the road towards a vision that will further reshape society to an unprecedented degree. This vision concerns our ability to \emph{manipulate matter} via information-theoretic and computing mechanisms and principles. It will be the jump from amorphous information to the \emph{incorporation of information to the physical world}. Information will not only be part of the physical environment: it will constantly interact with the surrounding environment and will have the ability to reshape it. \emph{Matter will become programmable} \cite{GCJM05} which is a plausible future outcome of progress in high-volume nanoscale assembly that makes it feasible to inexpensively produce millimeter-scale units that integrate computing, sensing, actuation, and locomotion mechanisms. This will enable the astonishing possibility of transferring the discrete dynamics from the computer memory black-box to the real world and to achieve a \emph{physical realization of any computer-generated object}. It will have profound implications for how we think about chemistry and materials. Materials will become user-programmed and smart, adapting to changing conditions in order to maintain, optimize or even create a whole new functionality using means that are intrinsic to the material itself. It will even change the way we think about engineering and manufacturing. We will for the first time be capable of building smart machines that adapt to their surroundings, such as an airplane wing that adjusts its surface properties in reaction to environmental variables \cite{Za07}, or even further realize machines that can self-built autonomously.

\subsection{Our Approach}

We imagine here a ``solution'' of automata (also called \emph{nodes} or \emph{processes} throughout the paper), a setting similar to that of Population Protocols and Network Constructors. Due to its highly restricted computational nature and its very local perspective, each individual automaton can practically achieve nothing on its own. However, when many of them cooperate, each contributing its meager computational capabilities, impressive global outcomes become feasible. This is, for example, the case in the Kilobot system, where each individual robot is a remarkably simple artifact that can perform only primitive locomotion via a simple vibration mechanism. Still, when a thousand of them work together, their global dynamics and outcome resemble the complex functions of living organisms. From our perspective, cooperation  involves the capability of the nodes to communicate by interacting in pairs and to bind to each other in an algorithmically controlled way. In particular, during an interaction, the nodes can update their local states according to a small common program that is stored in their memories and may also choose to connect to each other in order to start forming some required structure. Later on, if needed, they may choose to drop their connection, e.g. for rearrangement purposes. We may think of such nodes as the smallest possible programmable pieces of matter. For example, they could be tiny nanorobots or programmable molecules (e.g. DNA strands). Naturally, such elementary entities are not (yet) expected to be equipped with some internal mobility mechanism. Still, it is reasonable to expect that they could be part of some dynamic environment, like a boiling liquid or the human circulatory system, providing an external (to the nodes) interaction mechanism. This, together with the fact that the dynamics of such models have been recently shown to be equivalent to those of CRNs, motivate the idea of regarding such systems as \emph{a solution of programmable entities}. We model such an environment by imagining an \emph{adversary scheduler} operating in discrete steps and selecting in every step a pair of nodes to interact with each other.

Our main focus in this work, building upon the findings of \cite{MS14}, is to further investigate the cooperative structure formation capabilities of such systems. Our first main goal is to introduce a more realistic and more applicable version of network constructors by adjusting some of the abstract parameters of the model of \cite{MS14}. In particular, we introduce some physical (or geometrical) constraints on the connections that the processes are allowed to form. In the network constructors model of \cite{MS14}, there were no such imposed restrictions, in the sense that, at any given step, any two processes were candidates for an interaction, independently of their relative positioning in the existing structure/network. For example, even two nodes hidden in the middle of distinct dense components could interact and, additionally, there was no constraint on the number of active connections that a node could form (could be up to the order of the system). This was very convenient for studying the capability of such systems to self-organize into abstract networks and it helped show that arbitrarily complex networks are in principle constructible. On the other hand, this is not expected to be the actual mechanism of at least the first potential implementations. First implementations will most probably be characterized by physical and geometrical constraints. To capture this in our model, we assume that each device can connect to other devices only via a very limited (finite and independent of the size of the system) number of ports, usually four or six, which implies that, at any given time, a device has only a bounded number of neighbors. Moreover, we further restrict the connections to be always made at unit distance and to be perpendicular to connections of neighboring ports. Though such a model can no longer form abstract networks, it may still be capable of forming very practical 2-dimensional or 3-dimensional shapes. This is also in agreement with natural systems, where the complexity and physical properties of a system are rarely the result of an unrestricted interconnection between entities. 

It can be immediately observed that the universal constructors of \cite{MS14} do not apply in this case. In particular, those constructors cannot be adopted in order to characterize the constructive power of the model considered here. The reason is that they work by arranging the nodes in a long line and then exploiting the fact that connections are elastic and allow any pair of nodes of the line to interact independently of the distance between them. In contrast, no elasticity is allowed in the more local model considered here, where a long line can still be formed but only adjacent nodes of the line are allowed to interact with each other. As a result, we have to develop new techniques for determining the computational and constructive capabilities of our model. The other main novelty of our approach, concerns our attempt to overcome the inability of such systems to detect termination due to their limited global knowledge and their limited computational resources. For example, it can be easily shown that deterministic termination of population protocols can fail even in determining whether there is a single $a$ in an input assignment, mainly because the nodes do not know and cannot store in their memories neither the size of the network nor some upper bound on the time it takes to meet (or to influence or to be influenced by) every other node. To overcome the storage issue, we exploit the ability of nodes to self-assemble into larger structures that can then be used as distributed memories of any desired length. Moreover, we exploit the common (and natural in several cases) assumption that the system is \emph{well-mixed}, meaning that, at any given time, all permissible pairs of node-ports have an equal probability to interact, in order to give \emph{terminating protocols that are correct with high probability}. This is crucial not only because it allows to improve eventual stabilization to eventual termination but, most importantly, because it allows to develop terminating subroutines that can be sequentially composed to form larger modular protocols. Such protocols are more efficient, more natural, and more amenable to clear proofs of correctness, compared to existing protocols that are based on composing all subroutines in parallel and ``sequentializing'' them eventually by perpetual reinitializations. To the best of our knowledge, \cite{MCS12c} is the only work that has considered this issue but with totally different and more deterministic assumptions. Several other papers \cite{AADFP06,AAE08,MS14} have already exploited a uniform random interaction model, but in all cases for analyzing the expected time to convergence of stabilizing protocols and not for maximizing the correctness probability of terminating protocols, as we do here. 

In Section \ref{sec:rw}, we discuss further related literature. Section \ref{sec:model} formally defines the model under consideration and brings together all definitions and basic facts that are used throughout the paper. In Section \ref{sec:basic-con}, we provide direct (stabilizing) constructors for some basic shape construction problems. Section \ref{sec:counting} introduces our technique for counting the size $n$ of the system with high probability. The result of that section (i.e. Theorem \ref{the:count-half}) is of particular importance as it underlies all sequential composition arguments that follow in the paper. In particular, the protocol of Section \ref{sec:counting} is used as a subroutine in our \emph{universal constructors}, presented in Section \ref{sec:gen-con}, establishing that \emph{it is possible to construct with high probability arbitrarily complex shapes (and patterns) by terminating protocols}. These \emph{universality} results are discussed in Section \ref{sec:gen-con}. In Section \ref{sec:replication} we consider the problem of shape self-replication. Finally, in Section \ref{sec:conclusions} we conclude and give further research directions that are opened by our work.

\section{Further Related Work}
\label{sec:rw}

\noindent\textbf{Population Protocols.} Our model for shape construction is strongly inspired by the Population Protocol model \cite{AADFP06} and the Mediated Population Protocol model \cite{MCS11-2}. In the former, connections do not have states. States on the connections were first introduced in the latter. The main difference to our model is that \emph{in those models the focus was on the computation of functions of some input values and not on network construction}. Another important difference is that we allow the edges to choose between \emph{only two possible states} which was not the case in \cite{MCS11-2}. Interestingly, when operating under a uniform random scheduler, population protocols are formally equivalent to \emph{chemical reaction networks} (CRNs) which model chemistry in a \emph{well-mixed solution} \cite{Do14}. CRNs are widely used to describe information processing occurring in natural cellular regulatory networks, and with upcoming advances in synthetic biology, CRNs are a promising programming language for the design of artificial molecular control circuitry. However, CRNs and population protocols can only capture the dynamics of molecular counts and not of structure formation. Our model then may also be viewed as an extension of population protocols and CRNs aiming to capture the stable structures that may occur in a well-mixed solution. From this perspective, our goal is to determine what stable structures can result in such systems (natural or artificial), how fast, and under what conditions (e.g. by what underlying codes/reaction-rules). Most computability issues in the area of population protocols have now been resolved. Finite-state processes on a complete interaction network, i.e. one in which every pair of processes may interact, (and several variations) compute the \emph{semilinear predicates} \cite{AAER07}. Semilinearity persists up to $o(\log\log n)$ local space but not more than this \cite{MNPS11}.  If additionally the connections between processes can hold a state from a finite domain (note that this is a stronger requirement than the on/off that the present work assumes) then the computational power dramatically increases to the commutative subclass of $\rem{NSPACE}(n^2)$ \cite{MCS11-2}. Other important works include \cite{GR09} which equipped the nodes of population protocols with unique ids and \cite{BBCK10} which introduced a (weak) notion of speed of the nodes that allowed the design of fast converging protocols with only weak requirements. For a introductory texts see \cite{AR07,MCS11}.\\ 

\noindent\textbf{Algorithmic Self-Assembly.} There are already several models trying to capture the self-assembly capability of natural processes with the purpose of engineering systems and developing algorithms inspired by such processes. For example, \cite{Do12} proposes to learn how to program molecules to manipulate themselves, grow into machines and at the same time control their own growth. The research area of ``algorithmic self-assembly'' belongs to the field of ``molecular computing''. The latter was initiated by Adleman \cite{Ad94}, who designed interacting DNA molecules to solve an instance of the Hamiltonian path problem. The model that has guided the study in algorithmic self-assembly is the Abstract Tile Assembly Model (aTAM) \cite{Wi98,RW00} and variations. Recently, the Nubot model was proposed \cite{WCG13}, which was another important influence for our work. That model aims at motivating engineering of molecular structures that have complicated active dynamics of the kind seen in living biomolecular systems. It tries to capture the interplay between molecular structure and dynamics. Simple molecular components form assemblies that can grow (exponentially fast, by successive doublings) and shrink, and individual components undergo state changes and move relative to each other. The main result of \cite{WCG13} was that any computable shape of size $\leq n\times n$ can be built in time polylogarithmic in $n$, plus roughly the time needed to simulate a TM that computes whether or not a given pixel is in the final shape.\\

\noindent\textbf{Distributed Network Construction.} To the best of our knowledge, classical distributed computing has not considered the problem of constructing an actual communication network from scratch. From the seminal work of Angluin \cite{An80} that initiated the theoretical study of distributed computing systems up to now, the focus has been more on assuming a given communication topology and constructing a virtual network over it, e.g. a spanning tree for the purpose of fast dissemination of information. Moreover, these models assume most of the time unique identities, unbounded memories, and message-passing communication. Additionally, a process always communicates with its neighboring processes (see \cite{Ly96} for all the details). An exception is the area of geometric pattern formation by mobile robots (cf. \cite{SY99,DFSY10} and references therein). A great difference, though, to our model is that in mobile robotics the computational entities have complete control over their mobility and thus over their future interactions. That is, the goal of a protocol is to result in a desired interaction pattern while in our model the goal of a protocol is to construct a structure while operating under a totally unpredictable interaction pattern. Very recently, the \emph{Amoebot} model, a model inspired by the behavior of amoeba that allows algorithmic research on self-organizing particle systems and programmable matter, was proposed \cite{DGRS13,DDGRS14}. The goal is for the particles to self-organize in order to adapt to a desired shape without any central control, which is quite similar to our objective, however the two models have little in common. In the same work, the authors observe that, in contrast to the considerable work that has been performed w.r.t. to systems (e.g. self-reconfigurable robotic systems), only very little theoretical work has been done in this area. This further supports the importance of introducing a simple, yet sufficiently generic, models for distributed shape construction, as we do in this work.\\

\noindent\textbf{Network Formation in Nature.} Nature has an intrinsic ability to form complex structures and networks via a process known as \emph{self-assembly}. By self-assembly, small components (like e.g. molecules) automatically assemble into large, and usually complex structures (like e.g. a crystal). There is an abundance of such examples in the physical world. Lipid molecules form a cell's membrane, ribosomal proteins and RNA coalesce into functional ribosomes, and bacteriophage virus proteins self-assemble a capsid that allows the virus to invade bacteria \cite{Do12}. Mixtures of RNA fragments that self-assemble into self-replicating ribozymes spontaneously form cooperative catalytic cycles and networks. Such cooperative networks grow faster than selfish autocatalytic cycles indicating an intrinsic ability of RNA populations to evolve greater complexity through cooperation \cite{VMC12}. Through billions of years of prebiotic molecular selection and evolution, nature has produced a basic set of molecules. By combining these simple elements, natural processes are capable of fashioning an enormously diverse range of fabrication units, which can further self-organize into refined structures, materials and molecular machines that not only have high precision, flexibility and error-correction capacity, but are also self-sustaining and evolving. In fact, nature shows a strong preference for bottom-up design.

Systems and solutions inspired by nature have often turned out to be extremely practical and efficient. For example, the bottom-up approach of nature inspires the fabrication of biomaterials by attempting to mimic these phenomena with the aim of creating new and varied structures with novel utilities well beyond the gifts of nature \cite{Zh03}. Moreover, there is already a remarkable amount of work envisioning our future ability to engineer computing and robotic systems by manipulating molecules with nanoscale precision. Ambitious long-term applications include molecular computers \cite{BPSPF10} and miniature (nano)robots for surgical instrumentation, diagnosis and drug delivery in medical applications (e.g. it has very recently been reported that DNA nanorobots could even kill cancer cells \cite{DBC12}) and monitoring in extreme conditions (e.g. in toxic environments). However, the road towards this vision passes first through our ability to discover \emph{the laws governing the capability of distributed systems to construct networks}. The gain of developing such a theory will be twofold: It will give some insight to the role (and the mechanisms) of network formation in the complexity of natural processes and it will allow us engineer artificial systems that achieve this complexity.

\section{A Model of Network Constructors}
\label{sec:model}

There are $n$ nodes. Every node is a finite-state machine taking states from a finite set $Q$. Additionally, every node has a bounded number of ports which it uses to interact with other nodes. In the 2-dimensional (2D) case, there are four ports $p_y$, $p_x$, $p_{-y}$, and $p_{-x}$. For notational convenience and to improve readability we almost exclusively use $u,r,d,l$ (for \emph{up}, \emph{right}, \emph{down}, and \emph{left}, respectively) in place of $p_y$, $p_x$, $p_{-y}$, $p_{-x}$, respectively. Similarly, in the 3-dimensional (3D) case there are 6 ports $p_y$, $p_z$, $p_x$, $p_{-y}$, $p_{-z}$, and $p_{-x}$ (see Figure \ref{fig:node-ports}). Neighboring ports are perpendicular to each other, forming local axes. For example, in the 2-dimensional case, $p_y\perp p_x$, $p_x\perp p_{-y}$, $p_{-y}\perp p_{-x}$, and $p_{-x}\perp p_{y}$. Similar assumptions hold for the 3-dimensional case.  We can imagine the nodes moving passively in a well-mixed solution. An important remark is that the above coordinates are only for local purposes and do not necessarily represent the actual orientation of a node in the system. A node may be arbitrarily rotated so that, for example, its $x$ local coordinate is aligned with the $y$ real coordinate of the system or it is not aligned with any real coordinate. We assume that nodes may interact in pairs, whenever a port of one node is at unit distance and in straight line (w.r.t. to the local axes) from a port of another node. For example, it could be the case that, at some point during execution, the axis of the $p_y$ port of a node $u$ becomes aligned with the axis of the $p_{-x}$ port of another node $v$ and the distance between them is one unit. Then $u$ and $v$ interact and, apart from updating their local states, they can also activate the connection between their corresponding ports. Later on, they can again deactivate the connection if they want.

\begin{figure}[!hbtp]
\centering{
\includegraphics[width=0.35\textwidth]{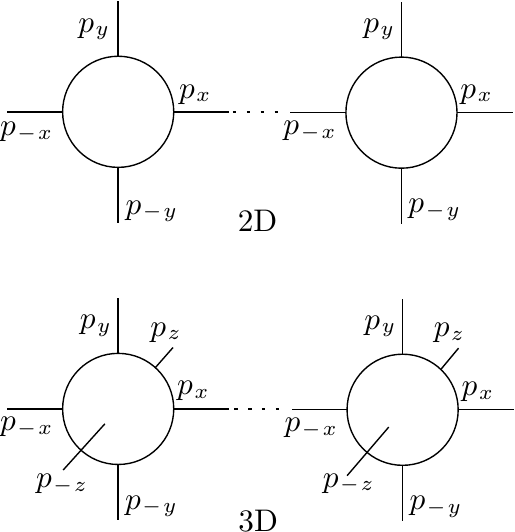}
}
\caption{The top figure depicts the 2D version of the model. Each node has four ports and consecutive ports are perpendicular to each other. Two nodes are interacting, the left one via its $p_x$ port and the right one via its $p_{-x}$ port. The interaction can occur because the distance between the nodes is unit and the corresponding ports are totally aligned (in a straight line). The bottom figure depicts the 3D version of the model. The only difference is an extra $z$ dimension.} \label{fig:node-ports}
\end{figure}

\begin{definition}
A 2D (or 3D) protocol is defined by a 4-tuple $(Q,q_0,Q_{out},\delta)$, where $Q$ is a finite set of \emph{node-states}, $q_0\in Q$ is the \emph{initial node-state}, $Q_{out}\subseteq Q$ is the set of \emph{output node-states}, and $\delta : (Q\times P)\times (Q\times P)\times \{0,1\} \rightarrow Q\times Q\times \{0,1\}$ is the \emph{transition function}, where $P=\{u,r,d,l\}$ (or $P=\{p_y, p_z, p_x, p_{-y}, p_{-z}, p_{-x}\}$, resp.) is the set of \emph{ports}. When required, also a special \emph{initial leader-state} $L_0\in Q$ may be defined.
\end{definition}

If $\delta((a,p_1),(b,p_2),c) = (a^{\prime},b^{\prime},c^{\prime})$, we call $(a,p_1),(b,p_2),c \rightarrow (a^{\prime},b^{\prime},c^{\prime})$ a \emph{transition} (or \emph{rule}). A transition $(a,p_1),(b,p_2),c \rightarrow (a^{\prime},b^{\prime},c^{\prime})$ is called \emph{effective} if $x\neq x^\prime$ for at least one $x\in\{a,b,c\}$ and \emph{ineffective} otherwise. When we present the transition function of a protocol we only present the effective transitions. Additionally, we agree that the \emph{size} of a protocol is the number of its states, i.e. $|Q|$.

The system consists of a population $V$ of $n$ distributed \emph{processes}, called \emph{nodes} when clear from context. Execution of the protocol proceeds in discrete steps. In every step, an unordered (cf. \cite{MS14} for more details and formalism about unordered interactions) pair of node-ports $(v_1,p_1)(v_2,p_2)$ is selected by an \emph{adversary scheduler} and these nodes interact via the corresponding ports and update their states and the state of the edge joining them according to the transition function $\delta$. Before every step $t\geq 1$, there is a shape configuration $G(t)=(V,E(t))$ where $E(t)\subseteq \{(v_1,p_1)(v_2,p_2):v_1,v_2\in V \mbox{ and } p_1,p_2\in P\}$. $(v_1,p_1)(v_2,p_2)\in E(t)$ means that before the $t$th selection of the scheduler the $p_1$ port of node $v_1$ is connected by an active edge to the $p_2$ port of node $v_2$. Observe that not all possible $E(t)$ are valid given the restrictions that connections are made at unit distance and are perpendicular whenever they correspond to consecutive ports of a node. For example, if $(v_1,r)(v_2,l)\in E(t)$ then $(v_1,l)(v_2,r)\notin E(t)$. In general, $E(t)$ is \emph{valid} (or \emph{feasible}) if any component defined by it is a subnetwork of the \emph{2-dimensional grid network with unit distances} (depicted e.g. in Figure \ref{fig:shape1} on page \pageref{fig:shape1}). From now on, we call a 2D (3D) \emph{shape} any connected subnetwork of the 2D (3D) grid network with unit distances. 

The shapes of $E(t)$ also determine the possible selections of the scheduler at step $t$. In particular, $(v_1,p_1)(v_2,p_2)$ can be selected for interaction at step $t$ iff after aligning port $p_1$ of $v_1$ and $p_2$ of $v_2$ at unit distance from each other and vertically to the neighboring ports, the component that would result by activating the connection (that component is the union of the shape of $v_1$ and the shape of $v_2$) is a shape. In particular, there should be no two nodes in the union (one from the first shape and one from the other) that occupy the same position of the grid network (i.e. that one falls over the other). For a simple example, imagine a \huge $\llcorner$ \normalsize shape consisting of three nodes and a vertical line $\mid$ consisting of two nodes. The bottom node of the line is allowed to occupy the missing corner of the \huge $\llcorner$ \normalsize while, on the other hand, the upper node of the line is not allowed (unless rotated) because this would result in the lower node of the line falling over the right node of the \huge $\llcorner$\normalsize . Though in principle a connected component could operate autonomously internally without having to wait for the scheduler to pick a pair of connected nodes to interact, throughout this work, for simplicity and uniformity of the arguments, we also restrict interactions between connected pairs to be controlled only by the scheduler. Notice that any port pair that is connected by an active edge before step $t$ may be selected by the scheduler at step $t$. 

A \emph{configuration} $C$ is a pair $(C_V, C_E)$, where $C_V : V \rightarrow Q$ specifies the state of each node and $C_E: \{(v_1,p_1)(v_2,p_2):v_1,v_2\in V \mbox{ and } p_1,p_2\in P\}\rightarrow \{0,1\}$ specifies the state (inactive or active) of every possible port pair. Observe that if $C_E$ is the edge configuration before step $t$, then $E(t)=C_E^{-1}[1]=\{r_1r_2:C_E(r_1r_2)=1\}$. We write $C\rightarrow C^{\prime}$ if \emph{$C^\prime$ is reachable in one step from $C$} (meaning via a single interaction that is permitted on $C$). We say that $C^{\prime}$ is \emph{reachable} from $C$ and write $C\rsa C^{\prime}$, if there is a sequence of configurations $C=C_{0},C_{1},\ldots,C_{t}=C^{\prime}$, such that $C_{i}\rightarrow C_{i+1}$ for all $i$, $0\leq i <t$. An \emph{execution} is a finite or infinite sequence of configurations $C_{0},C_{1},$ $C_{2},\ldots$, where $C_{0}$ is an initial configuration and $C_{i}\rightarrow C_{i+1}$, for all $i\geq 0$. A \emph{fairness condition} is imposed on the adversary to ensure the protocol makes progress. An infinite execution is \emph{fair} if for every pair of configurations $C$ and $C^{\prime}$ such that $C\rightarrow C^{\prime}$, if $C$ occurs infinitely often in the execution then so does $C^{\prime}$. In most cases, we assume that interactions are chosen by a \emph{uniform random scheduler} which, in every step $t$, selects independently and uniformly at random one of the interactions permitted by $E(t)$. Note that the uniform random scheduler is fair with probability 1. Wherever no such probabilistic scheduling assumption is made, every execution of a protocol will by definition considered to be fair. Random schedulers are particularly useful when one wants to analyze the running time of protocols. In this work, we mainly exploit them in order to devise terminating protocols that are correct with high probability (abbreviated w.h.p.), always meaning in this work with probability at least $1-1/n^c$ for some constant $c\geq 1$.

We define the \emph{output of a configuration} $C$ as the set of shapes $G(C)=(V_s,E_s)$ where $V_s=\{u\in V: C_V(u)\in Q_{out}\}$ and $E_s=C_E^{-1}[1]\cap\{(v_1,p_1)(v_2,p_2):v_1,v_2\in V_s \mbox{ and } p_1,p_2\in P\}$. In words, the output shapes of a configuration consist of those nodes that are in output states and those edges between them that are active. Throughout this work, we are interested in obtaining a single shape as the final output of the protocol (see, for an example, the black nodes and the connections between them in Figure \ref{fig:shape4} on page \pageref{fig:shape4}). As already mentioned, our main focus will be on terminating protocols. In this case, we assume a set $Q_{halt}\subseteq Q$ in place of $Q_{out}$. The only difference is that for all $q_{halt}\in Q_{halt}$, every rule containing $q_{halt}$ is ineffective. In contrast, states in $Q_{out}$ may have effective interactions which we guarantee (by design) to cease eventually resulting in the stabilization of the final shape. 

\begin{definition}
We say that an execution of a protocol on $n$ processes \emph{constructs (stably constructs) a shape} $G$, if it terminates (stabilizes, resp.) with output $G$.
\end{definition}

Every 2D shape $G$ has a unique minimum 2D rectangle $R_G$ enclosing it. $R_G$ is a shape with its nodes labeled from $\{0,1\}$. The nodes $G$ are labeled $1$, the nodes in $V(R_G)\bs V(G)$ are labeled $0$, and all edges are active. It is like filling $G$ with additional nodes and edges to make it a rectangle. In fact, as we shall see in Section \ref{sec:replication}, $R_G$ can also be constructed by a protocol, given $G$. The dimensions of $R_G$ are defined by $h_G$, which is the horizontal distance between a leftmost node and a rightmost node of the shape (x-dimension), and $v_G$, which is the vertical distance between a highest and a lowest node of the shape (y-dimension). Let also $max\_dim_G\coleq\max\{h_G,v_G\}$ and $min\_dim_G\coleq\min\{h_G,v_G\}$. Then $R_G$ can be extended by $max\_dim_G-min\_dim_G$ extra rows or columns, depending on which of its dimensions is smaller, to a $max\_dim_G\times max\_dim_G$ square $S_G$ enclosing $G$ (we mean here a $\{0,1\}$-labeled square, as above, in which $G$ can be identified). Observe, that such a square is not unique. For example, if $G$ is a horizontal line of length $d$ (i.e. $h_G=d$ and $v_G=1$) then it is already equal to $R_G$ and has to be extended by $d-1$ rows to become $S_G$. These rows can be placed in $d$ distinct ways relative to $G$, but all these squares have the same size $max\_dim_G\times max\_dim_G$ denoted by $|S_G|$.

A ($j$-dimensional) \emph{shape language} $\cl$ is a subset of the set of all possible ($j$-dimensional) shapes. We restrict our attention here to shape languages that contain a unique shape for each possible maximum dimension of the shape. In this case, it is equivalent, and more convenient, to translate $\cl$ to a language of labeled squares. In particular, we define in this work a \emph{shape language} $\cl$ by providing for every $d\geq 1$ a single $d\times d$ square with its nodes labeled from $\{0,1\}$. Such a square may also be defined by a $d^2$-sequence $S_d=(s_0,s_1,\ldots,s_{d^2-1})$ of bits or \emph{pixels}, where $s_j\in\{0,1\}$ corresponds to the $j$-th node as follows: We assume that the pixels are indexed in a ``zig-zag'' fashion, beginning from the bottom left corner, moving to the right until the bottom right corner is encountered, then one step up, then to the left until the node above the bottom left corner is encountered, then one step up again, then right, and so on (see the directed path in Figure \ref{fig:shape2} on page \pageref{fig:shape2}). The shape $G_d$ defined by $S_d$, called \emph{the shape of $S_d$}, is the one induced by the nodes of the square that are labeled 1 and throughout this work we assume that $max\_dim_{G_d}=d$.  

For simulation purposes, we also need to introduce appropriate shape-constructing Turing Machines (TMs). We now describe such a TM $M$: $M$'s goal is to construct a shape on the pixels of a $\sqrt{n}\times\sqrt{n}$ square, which are indexed in the zig-zag way described above. $M$ takes as input an integer $i\in\{0,1,\ldots,n-1\}$ and the size $n$ or the dimension $\sqrt{n}$ of the square (all in binary) and decides whether pixel $i$ should belong or not to the final shape, i.e. if it should be \emph{on} or \emph{off}, respectively. \footnote{If the TM is not provided with the square size together with the pixel, then it can only compute uniform/symmetric shapes that are independent of $n$. Such a shape could for example be one that has every even pixel \emph{on} and every odd pixel \emph{off}. But such shapes rarely satisfy the connectivity condition. For example, it is not clear how to activate all the leftmost pixels of the square by a uniform TM, because such a TM should somehow guess that pixel $2\sqrt{n}-1$ should be accepted without knowing $n$ and given that all pixels in $[1,2\sqrt{n}-2]$ must be rejected. So, it seems more natural to consider TMs that apart from the pixel index are also provided with $n$ or $\sqrt{n}$ (if the latter is more convenient) in binary. Now, it is straightforward how to resolve the acceptance of only the leftmost pixels of the square. The TM every time accepts the input-pixel $i$ iff $i=2k\sqrt{n}-1$, for some $k\geq 1$, or $i=2k\sqrt{n}$, for some $k\geq 0$. Observe that $2k\sqrt{n}$ can always be computed because the TM is also provided with $\sqrt{n}$ in its input.} Moreover, in accordance to our definition of a shape, the construction of the TM, consisting of the pixels that $M$ accepts (as \emph{on}) and the active connections between them, should be \emph{connected} (i.e. it should be a single shape).

\begin{definition} \label{def:computable}
We say that a shape language $\cl=(S_1,S_2,S_3,\ldots)$ is \emph{TM-computable} or \emph{TM-constructible} in space $f(d)$, if there exists a TM $M$ (as defined above) such that, for every $d\geq 1$, when $M$ is executed on the pixels of a $d\times d$ square results in $S_d$ (in particular, on input $(i,d)$, where $0\leq i\leq d^2-1$, $M$ gives output $S_d[i]$), by using space $O(f(d))$ in every execution.
\end{definition}

\begin{definition}
We say that a protocol $\ca$ \emph{constructs a shape language $\cl$ with useful space $g(n)\leq n$}, if $g(n)$ is the greatest function for which: (i) for all $n$, every execution of $\ca$ on $n$ processes constructs a shape $G\in \cl$ \footnote{$G$ is the shape of a labeled square $S\in\cl$ in case $\cl$ is defined in terms of such squares.} of order at least $g(n)$ (provided that such a $G$ exists) and, additionally, (ii) for all $G\in \cl$ there is an execution of $\ca$ on $n$ processes, for some $n$ satisfying $|V(G)|\geq g(n)$, that constructs $G$. Equivalently, we say that \emph{$\ca$ constructs $\cl$ with waste $n-g(n)$}.
\end{definition}

\section{Basic Constructions}
\label{sec:basic-con}

We give in this section protocols for two very basic shape construction problems, the spanning line problem and the spanning square problem. Both constructions are very useful because they organize the nodes in a way that is convenient for TM simulations that exploit the whole distributed memory as a tape. Keep in mind that the protocols of this section are \emph{stabilizing} (that is eventually the output shape stops changing) and not terminating. Our technique that allows for terminating constructions will be introduced in Section \ref{sec:counting}.

\subsection{Global Line}

We begin by presenting a protocol for the spanning line problem. Assume that there is initially a unique leader in state $L_r$ (we typically use `$L$' for the states of a leader to distinguish from the left port `$l$') and all other nodes are in state $q_0$. A protocol that constructs a spanning line is described by the effective rules $(L_i,i),(q_0,j),0\ra (q_1,L_{\bar{j}},1)$ for all $i,j\in\{u,r,d,l\}$ where $\bar{j}$ denotes the port opposite to port $j$. In words, initially the leader waits to meet a $q_0$ via its right port. Assume that it meets port $j$ of a $q_0$. Then the connection between them becomes activated and the leader takes the place of the $q_0$ leaving behind a $q_1$. Moreover, the new leader is now in state $L_{\bar{j}}$ indicating that it is now waiting to expand the line towards the port that is opposite to the one that is already active, which guarantees that a straight line will be formed. We could even have a simplified version of the form $(L,r),(q_0,l),0\ra (q_1,L,1)$. This is a little slower, because now an effective interaction, and a resulting expansion of the line, only occurs when the $r$ port of the leader meets the $l$ port of a $q_0$.

\subsection{$\sqrt{n}\times\sqrt{n}$ Square}
\label{subsec:stab-square}

We now give two protocols for the spanning square problem. We assume for simplicity that the square root of $n$ is integer. We again begin from the case where there is a preelected unique leader in state $L_u$ and all other nodes are initially in state $q_0$.

\floatname{algorithm}{Protocol}
\renewcommand{\algorithmiccomment}[1]{// #1}
\begin{algorithm}[!h]
  \caption{\emph{Square}}\label{prot:square}
  \begin{algorithmic}
    \medskip
    \State $Q=\{L_u,L_r,L_d,L_l,q_0,q_1\}$
    \State $\delta$: 
    \begin{align*}
    (L_u,u),(q_0,d),0&\ra (q_1,L_r,1)\\
    (L_r,r),(q_0,l),0&\ra (q_1,L_d,1)\\
    (L_d,d),(q_0,u),0&\ra (q_1,L_l,1)\\
    (L_l,l),(q_0,r),0&\ra (q_1,L_u,1)\\
    (L_u,u),(q_1,d),0&\ra (L_l,q_1,1)\\
    (L_r,r),(q_1,l),0&\ra (L_u,q_1,1)\\
    (L_d,d),(q_1,u),0&\ra (L_r,q_1,1)\\
    (L_l,l),(q_1,r),0&\ra (L_d,q_1,1)
    \phantom{\hspace{10cm}}
    \end{align*}
    \State \Comment {All transitions that do not appear have no effect}
  \end{algorithmic}
\end{algorithm} 

We now describe the idea that Protocol \ref{prot:square} implements. The protocol first constructs a $2\times 2$ square. When it is done, the leader is at the bottom-right corner and is in state $L_d$. This can only cause the attachment of a free $q_0$ from below. When this occurs, the leader moves on the new node and tries to move to the left. This will occur by the attachment of another free node from the left this time. When this occurs, the leader moves on the new node and tries to move up. But this time the up movement cannot succeed because the leader is below the bottom-left corner of the square. Instead the leader activates the connection with that corner and tries to move another step left. When it succeeds, it tries again to move up, which now can occur because it has now moved outside the left boundary of the $2\times 2$ square. In general, whenever the leader is at the left (the up, right, and down cases are symmetric) of the already constructed square it tries to move to the right in order to walk above the square. If it does not succeed, it is because it has not yet moved over the upper boundary, so it activates the edge to the right, takes another step up and then tries again to move to the right. In this way, the leader always grows the square perimetrically and in the clockwise direction.

We next use turning marks to simplify and speed up the turning process. The unique leader begins in state $L_d^2$. Now, instead of always trying to turn, the leader turns only when it meets special marks left by the previous phase near the corners of the square. When it meets such a mark, the leader introduces the new corner and a new mark adjacent to that corner to be found during the next phase, and then makes a turn (see Figure \ref{fig:square2}). A difference to the previous protocol is that now several of the nodes of the new perimeter may remain disconnected for a while from their internal neighbors (i.e. those belonging to the internal perimeter constructed in the previous phase). However, rules of the form $(q_1,i),(q_1,\bar{i}),0\ra (q_1,q_1,1)$ guarantee that these nodes eventually become connected. A disadvantage of this approach is that the structure may be less ``rigid'' than the previous one as long as several $(q_1,q_1)$ connections are not yet established. The protocol is formally presented in Protocol \ref{prot:square2}.

\begin{figure}[!hbtp]
\centering{
\includegraphics[width=0.65\textwidth]{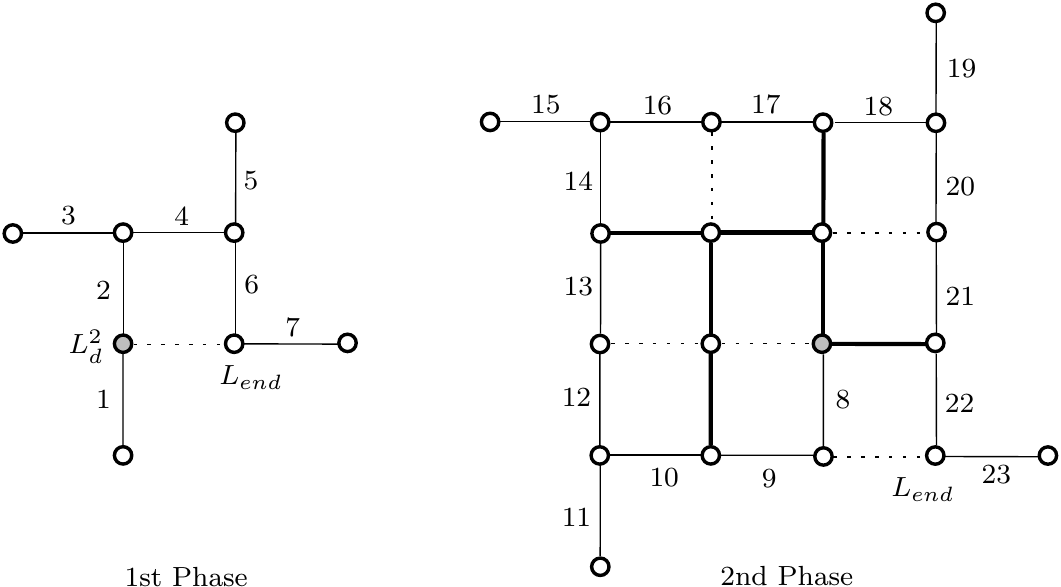}
}
\caption{The first two phases of Protocol \ref{prot:square2}. Gray nodes indicate the starting point of each phase. Edge labels indicate the order by which the square grew during the phase. The nodes labeled $L_{end}$ are the points at which each of the phases ends. The unlabeled solid edges of Phase 2 indicate the shape that preexisted from Phase 1. The nodes attached at ``times'' $1,3,5,7$ of Phase 1 and $11,15,19,23$ of Phase 2 are the turning marks that will be exploited for easier turning by the leader in the subsequent phase. Dotted edges are edges that have not be activated yet but will for sure be activated eventually resulting in a more ``rigid'' structure.} \label{fig:square2}
\end{figure}

\floatname{algorithm}{Protocol}
\renewcommand{\algorithmiccomment}[1]{// #1}
\begin{algorithm}[!h]
  \caption{\emph{Square2}}\label{prot:square2}
  \begin{algorithmic}
    \medskip
    \State $Q=\{L_i,L_i^2,L_i^3,L_i^4,L_{end},q_0,q_1\}$, for all $i\in\{u,r,d,l\}$
    \State $\delta$: 
    \begin{align*}
    (L_d^2,d),(q_0,u),0&\ra (L_u^1,q_1,1)\\
    (L_l^2,l),(q_0,r),0&\ra (L_r^1,q_1,1)\\
    (L_u^2,u),(q_0,d),0&\ra (L_d^1,q_1,1)\\
    (L_r^2,r),(q_0,l),0&\ra (L_{end},q_1,1)\\\\
    (L_u^1,u),(q_0,d),0&\ra (q_1,L_l^2,1)\\
    (L_r^1,r),(q_0,l),0&\ra (q_1,L_u^2,1)\\
    (L_d^1,d),(q_0,u),0&\ra (q_1,L_r^2,1)\\
    (L_r^1,u),(q_0,d),0&\ra (q_1,L_l^2,1)\\\\      
    (L_{end},d),(q_0,u),0&\ra (q_1,L_l,1)\\\\
	(L_l,l),(q_0,r),0&\ra (q_1,L_l,1)\\
	(L_l,l),(q_1,r),0&\ra (q_1,L_l^3,1)\\
	(L_u,u),(q_0,d),0&\ra (q_1,L_u,1)\\
	(L_u,u),(q_1,d),0&\ra (q_1,L_u^3,1)\\
	(L_r,r),(q_0,l),0&\ra (q_1,L_r,1)\\
	(L_r,r),(q_1,l),0&\ra (q_1,L_r^3,1)\\
	(L_d,d),(q_0,u),0&\ra (q_1,L_d,1)\\
	(L_d,d),(q_1,u),0&\ra (q_1,L_d^3,1)\\\\	
	(L_l^3,l),(q_0,r),0&\ra (q_1,L_d^4,1)\\
    (L_u^3,u),(q_0,d),0&\ra (q_1,L_l^4,1)\\
    (L_r^3,r),(q_0,l),0&\ra (q_1,L_u^4,1)\\
    (L_d^3,d),(q_0,u),0&\ra (q_1,L_r^4,1)\\\\      
	(L_d^4,d),(q_0,u),0&\ra (L_u,q_1,1)\\
    (L_l^4,l),(q_0,r),0&\ra (L_r,q_1,1)\\
    (L_u^4,u),(q_0,d),0&\ra (L_d,q_1,1)\\
    (L_r^4,r),(q_0,l),0&\ra (L_{end},q_1,1)\\\\
	(q_1,i),(q_1,\bar{i}),0&\ra (q_1,q_1,1) \text{ for all } i\in\{u,r,d,l\}, \text{ where } \bar{i} \text{ denotes the opposite port of } i\\
	(L_u,r),(q_1,l),0&\ra (L_u,q_1,1)\\
	(L_r,d),(q_1,u),0&\ra (L_r,q_1,1)\\
	(L_d,l),(q_1,r),0&\ra (L_d,q_1,1)\\
	(L_l,u),(q_1,d),0&\ra (L_l,q_1,1)			    		    
    \phantom{\hspace{10cm}}
    \end{align*}
%    \State \Comment {All transitions that do not appear have no effect}
  \end{algorithmic}
\end{algorithm} 

The unique leader assumption is in all the above cases not necessary.

\section{Probabilistic Counting}
\label{sec:counting}

In this section, we consider the problem of counting $n$. In particular, we assume a uniform random scheduler and we want to give protocols that always terminate but still w.h.p. count $n$ correctly. The importance of such protocols is further supported by the fact that we cannot guarantee anything much better than this. In particular, observe that if we require a population protocol to always terminate and additionally to always be correct, then we immediately obtain an impossibility result. It is easy to see this by imagining a system in which a unique leader interacts with the other nodes (there are no interactions between non-leaders and no connections are ever activated). Any fair execution $s_1$ of a protocol in a population of size $n$ in which the leader outputs $n$ and terminates can appear as an ``unfair'' prefix of a fair execution $s^\prime=s_1s_2$ on a population of size $n^\prime>n$. This is a contradiction because in $s^\prime$ the leader must again terminate and output $n$ even though $n^\prime\neq n$. The main reason is that $|s_1|$ is finite and independent of $n$; it only depends on the maximum ``depth'' of a chain of rules of the protocol leading to termination. This implies that in $s^\prime$ the leader terminates before interacting with all other nodes.

In Section \ref{subsec:leader-counting}, we present a population protocol with a unique leader that solves w.h.p. the counting problem and always terminates. To the best of our knowledge, this is the first protocol of this sort in the relevant literature. All probabilistic protocols that have appeared so far, like e.g. those in \cite{AADFP06,AAE08}, are not terminating but stabilizing and the high probability arguments concern their time to convergence. Additionally, this protocol is crucial because all of our generic constructors, that are developed in Section \ref{sec:gen-con}, are terminating by assuming knowledge of $n$ (stored distributedly on a line of length $\log n$). They obtain access to this knowledge w.h.p. by executing the counting protocol as a subroutine. Finally, knowing $n$ w.h.p. allows to develop protocols that exploit sequential composition of (terminating) subroutines, which makes them much more natural and easy to describe than the protocols in which all subroutines are executed in parallel and perpetual reinitializations is the only means of guaranteeing eventual correctness (the latter is the case e.g. in \cite{GR09,MCS11-2,MS14}, but not in \cite{MCS12c} which was the first extension to allow for sequential composition based on some non-probabilistic assumptions). Then in Section \ref{subsec:impossibility} we comment on the possibility of dropping the unique leader assumption. In particular, we conjecture that in general it is impossible to solve the problem if all nodes are identical and we present some evidence supporting this. Finally, in Section \ref{subsec:uids} we establish that if the nodes have unique ids then it is possible to solve the problem without a unique leader.  

\subsection{Fast Probabilistic Counting With a Leader}
\label{subsec:leader-counting}

Keep in mind that in order to simplify the discussion, a sort of population protocol is presented here. So, there are no ports, no geometry, and no activations/deactivations of connections. In every step, a uniform random scheduler selects equiprobably one of the $n(n-1)/2$ possible node pairs, and the selected nodes interact and update their states according to the transition function. The only difference from the classical population protocols is that a distinguished leader node has unbounded local memory (of the order of $n$). In Section \ref{subsec:counting-line}, we will adjust the protocol to make it work in our model.\\

\noindent\textbf{Counting-Upper-Bound Protocol:} There is initially a unique leader $l$ and all other nodes are in state $q_0$. Assume that $l$ has two $n$-counters in its memory, initially both set to 0. So, the state of $l$ is denoted as $l(r_0,r_1)$, where $r_0$ is the value of the first counter (call the corresponding counter $R_0$) and $r_1$ the value of the second counter (call it $R_1$), $0\leq r_0,r_1\leq n$. The rules of the protocol are 
\begin{align*}
(l(r_0,r_1),q_0)&\ra (l(r_0+1,r_1),q_1),\\
(l(r_0,r_1),q_1)&\ra (l(r_0,r_1+1),q_2), \mbox{ and}\\
(l(r_0,r_1),\cdot)&\ra (halt,\cdot) \mbox{ if } r_0=r_1. 
\end{align*}
It is worth reminding that, for the time being, we have disregarded edge-states and, therefore, the rules of the protocol only specify how the states of the nodes are updated. Observe that $r_0$ counts the number of $q_0$s in the population while $r_1$ counts the number of $q_1$s. Initially, there are $n-1$ $q_0$s and no $q_1$s. Whenever $l$ interacts with a $q_0$, $r_0$ increases by 1 and the $q_0$ is converted to $q_1$. Whenever $l$ interacts with a $q_1$, $r_1$ increases by 1 and the $q_1$ is converted to $q_2$. The process terminates when $r_0=r_1$ for the first time. We also give to $r_0$ an initial head start of $b$, where $b$ can be any desired constant. So, initially we have $r_0=b$, $r_1=0$ and $i=\#q_0=n-b-1$, $j=\#q_1=b$ (this can be easily implemented in the protocol by having the leader convert $b$ $q_0$s to $q_1$s as a preprocessing step). So, we have two competing processes, one counting $q_0$s and the other counting $q_1$s, the first one begins with an initial head start of $b$ and the game ends when the second catches up the first. We now prove that when this occurs the leader will almost surely have already counted at least half of the nodes.

\begin{theorem} \label{the:count-half}
The above protocol halts in every execution. Moreover, when this occurs, w.h.p. it holds that $r_0 \geq n/2$.
\end{theorem}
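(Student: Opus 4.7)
I would split the proof into a deterministic termination claim and a probabilistic correctness claim, which are essentially independent.

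\textbf{Termination.} Under the uniform random scheduler, every pair of nodes is selected infinitely often with probability~1, so the execution is fair almost surely. Under any such fair execution, every non-leader node is eventually interacted with by the leader arbitrarily many times; each $q_0$ is therefore eventually turned into $q_1$, and each $q_1$ (whether initial or newly created) is eventually turned into $q_2$. At the latest moment, once every non-leader has reached $q_2$, we have $r_0=r_1=n-1$ and the halting rule $(l(r_0,r_1),\cdot)\to(halt,\cdot)$ fires. Hence the protocol halts in every fair execution, and in particular almost surely under the random scheduler.

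\textbf{High-probability correctness.} Define the gap $g_t:=r_0(t)-r_1(t)$; note that $g_t$ is exactly the current number of $q_1$ nodes, $g_0=b$, and halting occurs the first time $g_t$ hits $0$. Restrict attention to the subsequence of \emph{effective} interactions (those in which the leader meets a $q_0$ or a $q_1$); other interactions fix every counter. Conditioned on the next effective step, the uniform random scheduler assigns the increment to $r_0$ with probability $p_t=(n-1-r_0(t))/(n-1-r_1(t))$ and to $r_1$ with probability $q_t=g_t/(n-1-r_1(t))=1-p_t$. Therefore $g_t$ evolves as a $\pm1$ walk with state-dependent drift
\[
p_t-q_t=\frac{n-1-2r_0(t)+r_1(t)}{n-1-r_1(t)},
\]
which is strictly positive whenever $r_0(t)<n/2$. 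Let $\tau_\star:=\min\{t:r_0(t)\geq\lceil n/2\rceil\}$; it suffices to show $\Pr[g_t=0\text{ for some }t<\tau_\star]\leq1/n^c$.

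The plan is a phase decomposition. In the \emph{bulk phase} $t\in[T_0,\tau_\star]$ with $T_0=\Theta(\log n)$, use the Doob martingale $M_t:=g_t-b-\sum_{s<t}(p_s-q_s)$, whose increments are bounded by $2$; Azuma–Hoeffding gives $|M_t|=O(\sqrt{t\log n})$ with probability at least $1-1/n^{c+1}$. The cumulative drift in turn tracks the mean-field ODE $dd^\star/ds=(2N-s-3d^\star)/(2N-s+d^\star)$ (with $N=n-1$, $s=t+b$), which a short analysis shows satisfies $d^\star(s)=\Theta(s)$ throughout $r_0<n/2$; so $d^\star(\tau_\star)=\Theta(n)$, dwarfing the Azuma deviation and forcing $g_t>0$ in this phase. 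In the \emph{initial phase} $t<T_0$, the mean drift is comparable to $b$ and Azuma is too weak, so I switch to a direct estimate: the per-step decrease probability is $q_t\leq(b+T_0)/(n-T_0)=O(\log n/n)$, hence the probability of at least $b$ decreases among the first $T_0$ effective steps is at most $\binom{T_0}{b}(O(\log n/n))^b=O(\log^{2b}\!n/n^b)$, which is $1/n^c$ once $b$ is chosen appropriately large. A union bound over the two phases yields the theorem.

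\textbf{Main obstacle.} The bulk phase is handled by any standard concentration inequality, so the real work is the initial phase: when $g$ is still $O(b)$, the cumulative drift has not yet had time to build up a ``buffer,'' so Azuma-type bounds cannot rule out a random drop to zero. The saving observation is that in this window the one-step decrease probability is $O(1/n)$, so a crude combinatorial/union bound is much stronger than the martingale estimate; this is the step on which the choice of the head start constant $b$ (as a function of the desired $c$) rests.
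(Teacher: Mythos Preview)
Your termination argument is fine, and your initial-phase estimate is exactly the right observation: when $g_t=r_0-r_1$ is $O(1)$, the one-step decrease probability $q_t=g_t/(n-1-r_1)$ is $O(1/n)$ (since $r_1\le r_0<n/2$ keeps the denominator at least $n/2-1$), so hitting $0$ in a short window is polynomially unlikely.

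The paper's proof uses essentially this same estimate, but organised differently and \emph{without any bulk phase}. The point you are not exploiting is that the bound $q_t\le b/(n/2-1)$ holds \emph{every} time $g_t\in[0,b]$, not only during the first $T_0$ steps. So instead of a martingale/ODE analysis of the typical trajectory, the paper simply watches the strip $[0,b]$: each time the walk sits at $b-1$, classical gambler's ruin with bias $p/q\approx (n/2)/b$ gives probability $\approx 1/n^{b-1}$ of reaching $0$ before returning to $b$; a union bound over at most $n$ such excursions yields failure probability $\le 1/n^{b-2}$. No control of the bulk trajectory is needed---it does not matter whether $g_t$ is $\Theta(t)$ or wanders, provided each visit to the danger zone is individually safe.

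Your bulk phase, by contrast, has a genuine gap. You assert that ``the cumulative drift tracks the mean-field ODE'' and hence is $\Theta(t)$, but $\sum_{s<t}(p_s-q_s)$ is a \emph{random} function of the trajectory: each term $p_s-q_s$ depends on $g_s$, which is exactly the quantity you are trying to bound. Azuma controls only the martingale part $M_t$; it says nothing about the drift sum. Turning the ODE heuristic into a rigorous statement that $g_t=\Theta(t)$ requires something like Wormald's differential-equation method (with its Lipschitz and boundedness hypotheses checked) or a careful bootstrap on $t$---neither of which is ``a short analysis''. The approach can be completed along those lines, and would then give a slightly sharper exponent (roughly $c\approx b$ rather than the paper's $c=b-2$), but as written the step is circular and the paper's elementary gambler's-ruin reduction avoids the issue entirely.
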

\begin{proof}
Recall that the scheduler is a uniform random one, which, in every step, selects independently and uniformly at random one of the $n(n-1)/2$ possible interactions. Recall also that the random variable $i$ denotes the number of $q_0$s and $j$ denotes the number of $q_1$s in the configuration, where initially $i=n-b-1$ and $j=b$. Observe also that all the following hold: $j=r_0-r_1$, $r_0\geq r_1$, because every conversion of a $q_1$ to $q_2$ must have been first counted by $R_0$ as a conversion of a $q_0$ to $q_1$, $r_1=(n-1)-(i+j)$, and $r_0+r_1$ is equal to the number of effective interactions (see Figure \ref{fig:counting-random-variables}).

\begin{figure}[!hbtp]
\centering{
\includegraphics[width=0.73\textwidth]{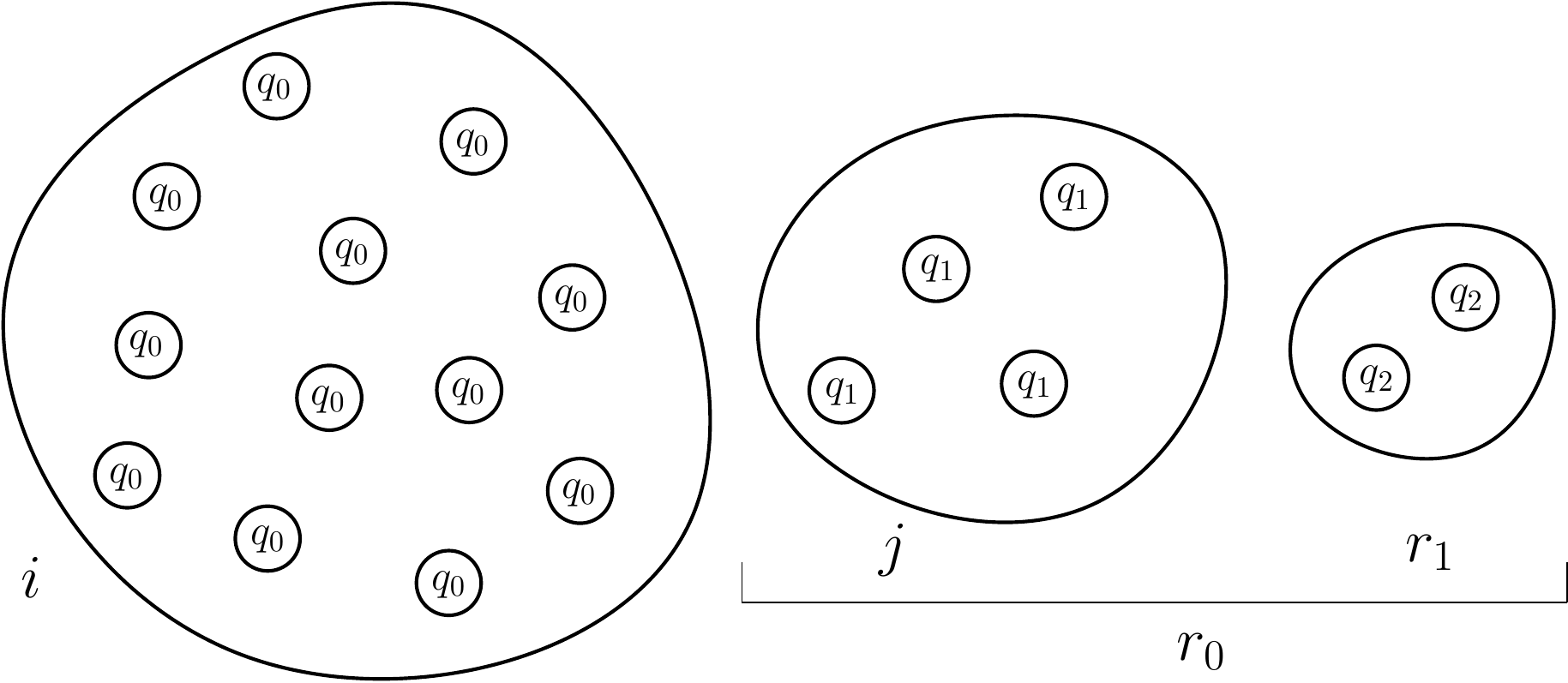}
}
\caption{A configuration of the system (excluding the leader). The number of $q_0$s remaining is denoted by $i$. The number of $q_1$s introduced so far is denoted by $j$. The value of the counter $R_1$ is equal to the number of $q_1$s encountered so far by the leader, which is in turn equal to the number of $q_2$s introduced, and is denoted by $r_1$. The value of the counter $R_0$ is equal to the number of $q_0$s encountered, which is equal to the number of $q_1$s and $q_2$s introduced and is denoted by $r_0$.} \label{fig:counting-random-variables}
\end{figure}

We focus only on the effective interactions (we also disregard the halting interaction), which are always interactions between $l$ and $q_0$ or $q_1$. Given that we have an effective interaction, the probability that it is an $(l,q_0)$ is $p_{ij}=i/(i+j)$ and the probability that it is an $(l,q_1)$ is $q_{ij}=1-p_{ij}=j/(i+j)$. This random process may be viewed as a random walk (r.w.) on a line with $n+1$ positions $0,1,\ldots,n$ where a particle begins from position $b$ and there is an absorbing barrier at $0$ and a reflecting barrier at $n$. The position of the particle corresponds to the difference $r_0-r_1$ of the two counters which is equal to $j$. Observe now that if $j\geq n/2$ then $r_0-r_1\geq n/2\Rightarrow r_0\geq n/2$, so it suffices to consider a second absorbing barrier at $n/2$. The particle moves forward (i.e. to the right) with probability $p_{ij}$ and backward with probability $q_{ij}$ (see Figure \ref{fig:random-walk-ij}). This is a ``difficult'' random walk because the transition probabilities not only depend on the position $j$ but also on the sum $i+j$ which decreases in time. In particular, the sum decreases whenever an $(l,q_1)$ interaction occurs, in which case a $q_1$ becomes $q_2$. That is, whenever the random walk returns to some position $j$ of the line, its transition probabilities have changed (because every leaving and returning involves at least on step to the left, which decreases the sum). Observe also that, in our case, the duration of the random walk can be at most $n-b$, in the sense that if the particle has not been absorbed after $n-b$ steps then we have success. The reason for this is that $n-b$ effective interactions imply that $r_0+r_1= n$, but as $r_0\geq r_1$, we have $r_0\geq n/2$. In fact, $r_0\geq n/2\Leftrightarrow$ $j+r_1\geq n/2$. We are interested in upper bounding $\P[\text{failure}]=\P[\text{reach 0 before } r_0 \geq n/2 \text{ holds}]$, which is in turn upper bounded by the probability of reaching 0 before reaching $n/2$ and before $n-b$ effective interactions have occurred (this is true because, in the latter event, we have disregarded some winning conditions like, for example, guaranteed winning in $(n/2)+r_1$ effective interactions, in which case we have winning in only $(n/2)+r_1$ effective interactions and $j$ having become at most $(n/2)-r_1$). It suffices to bound the probability of reaching 0 before $n$ effective interactions have occurred.

\begin{figure}[!hbtp]
\centering{
\includegraphics[width=0.73\textwidth]{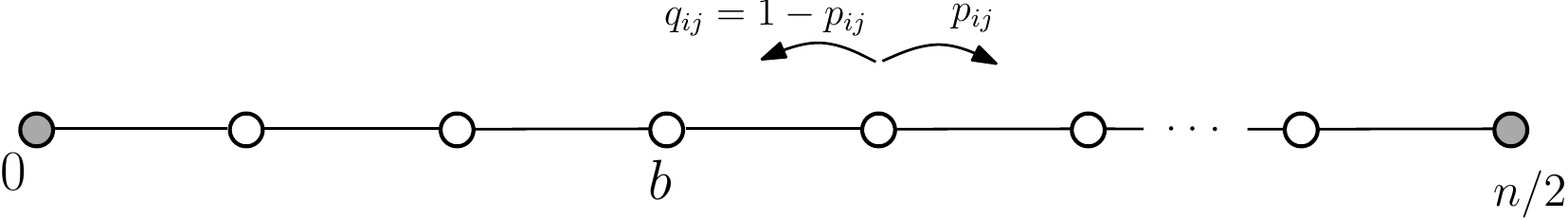}
}
\caption{A random walk modeling of the probabilistic process that the Counting-Upper-Bound protocol implements. A particle begins from position $b$. The position $j$ of the particle corresponds to the difference between $r_0$ and $r_1$. Forward movement corresponds to an increment of $r_0$ and backward movement corresponds to an increment of $r_1$. Absorption at 0 corresponds to $r_1$ becoming equal to $r_0$ and thus to termination (and to failure if this occurs before $r_0\geq n/2$ holds). Absorption at $n/2$ corresponds to $r_0$ becoming at least $n/2$ (before being absorbed at 0) and thus to success.} \label{fig:random-walk-ij}
\end{figure}

Thus, we have $r_0+r_1\leq n$ but $r_1 \leq r_0\Rightarrow 2r_1\leq r_0+r_1$, thus $2r_1\leq n\Rightarrow$ $r_1\leq n/2\Rightarrow$ $(n-1)-(i+j)\leq n/2\Rightarrow$ $i+j\geq (n/2)-1=n^\prime$. And if we set $n^\prime=(n/2)-1$ we have $i+j\geq n^\prime$. Moreover, observe that when $r_0+r_1=n+1$ we have $n+1=r_0+r_1\leq 2r_0\Rightarrow$ $r_0\geq n/2$. In summary, during the first $n$ effective interactions, it holds that $i+j\geq n^\prime=(n/2)-1$ and when interaction $n+1$ occurs it holds that $r_0\geq n/2$, that is, if the process is still alive after time $n$, then $r_0$ has managed to count up to $n/2$ and the protocol has succeeded.

Now, $i+j\geq n^\prime$ implies that $p_j\geq (n^\prime-j)/n'$  and $q_j\leq j/n^\prime$ so that now the probabilities only depend on the position $j$. This new walk is the well-studied Ehrenfest random walk coming from the theory of brownian motion. Imagine gas molecules that move about randomly in a container which is divided into two halves symmetrically by a partition. A hole is made in the partition to allow the exchange of molecules between the subcontainers. Suppose there are n molecules in the container. Think of the partitions as two urns, I and II, containing balls labeled $1$ through $n$. Molecular motion can be modeled by choosing a number between 1 and $n$ at random and moving the corresponding ball from the urn it is presently in to the other. This is a historically important physical model, known as the Ehrenfest model of diffusion, introduced in \cite{EE07} in the early days of statistical mechanics to study thermodynamic equilibrium. So, the probability of failure of our counting protocol is asymptotically equivalent to the probability that urn I becomes empty in the first $n$ steps assuming that it initially contains $b$ balls. This walk has been studied by Kac in \cite{Ka47} who, among other things, proved that the mean recurrence time is $((R+k)!(R-k)!/(2R)!)2^{2R}$ (\cite{Ka47}, page 386). If we set $k=-R$ so that the initial position is $R+k=0$, then this evaluates to $2^{2R} = 2^{n/2}$, because $2R$ is the total length of the line. This shows that, even if we begin from position 0 instead of b, the recurrence time is expected to be huge and we do not expect the walk to return to 0 and fail in only $n$ effective steps. In the sequel, we turn this into the desired high probability argument.

We will reduce the Ehrenfest walk to one in which the probabilities do not depend on $j$. We first further restrict our walk, this time to the prefix $[0,b]$ of the line. In this part, it holds that $j\leq b$ which implies that $p\geq (n^\prime-b)/n^\prime$  and $q\leq b/n^\prime$. Now we set $p= (n^\prime-b)/n^\prime$  and $q= b/n^\prime$. Observe that this may only increase the probability of failure, so the probability of failure of the new walk is an upper bound on the probability of failure of our original walk. Recall that initially the particle is on position $b$. Imagine now an absorbing barrier at 0 and another one at $b$. Whenever the r.w. is on $b-1$ it will either return to $b$ before reaching 0 or it will reach 0 (and fail) before returning to $b$. So, we now have a r.w. with $b+1$ positions, where positions 0 and $b$ are absorbing and due to symmetry it is equivalent to assume that the particle begins from position 1, moves forward with probability $p^\prime=q$, backward with probability $q^\prime=p$, and it fails at $b$. Thus, it is equivalent to bound $\P[$reach $b$ before 0 (when beginning from position 1)$]$. This is the probability of winning in the classical ruin problem analyzed e.g. in \cite{Fe68} page 345. If we set $x=q^\prime/p^\prime=p/q=(n^\prime-b)/b$ we have that:
\begin{align*}
\P[\text{reach } b \text{ before } 0]&=1-\frac{x^b-x}{x^b-1}=\frac{x-1}{x^b-1}\\
&\leq \frac{x}{x^b-1}\approx \frac{1}{x^{b-1}}\\
&\approx \frac{1}{n^{b-1}}.
\end{align*}

Thus, whenever the original walk is on $b-1$, the probability of reaching 0 before reaching $b$ again, is at most $1/n^{b-1}$. Now assume that we repeat the above walk $n$ times, i.e. we place the particle on $b-1$, play the game, then if it returns to $b$ we put again the particle on $b-1$ and play the game again, and so on. From Boole-Bonferroni inequality, we have that:

\begin{align*}
\P[\text{fail at least once}]&\leq \sum_{m=1}^{n} \P[\text{fail at repetition }m]\\
&\leq \sum_{m=1}^{n} \frac{1}{n^{b-1}}=\frac{n}{n^{b-1}}\\
&=\frac{1}{n^{b-2}}.
\end{align*}

In summary, even if the protocol was restricted to disregard counter differences that are greater than $b$, still with probability at least $1-1/n^c$ (for constant $c=b-2$) the protocol has not terminated after at least $n$ effective interactions, which in turn implies that the leader has counted at least half of the nodes.
\qed
\end{proof}

\begin{remark}
For the Counting-Upper-Bound protocol to terminate, it suffices for the leader to meet every other node twice. This takes twice the expected time of a \emph{meet everybody} (cf. \cite{MS14}), thus the expected running time of Counting-Upper-Bound is $O(n^2\log n)$ (interactions). 
\end{remark}

\begin{remark}
When the Counting-Upper-Bound protocol terminates, w.h.p. the leader knows an $r_0$ which is between $n/2$ and $n$. So any subsequent routine can use directly this estimation and pay in an \emph{a priori} waste which is at most half of the population. In practice, this estimation is expected to be much closer to $n$ than to $n/2$ (in all of our experiments for up to 1000 nodes, the estimation was always close to $(9/10)n$ and usually higher). On the other hand, if we want to determine the exact value of $n$ and have no \emph{a priori} waste then we can have the leader wait an additional large polynomial (in $r_0$) number of steps, to ensure that the leader has met every other node w.h.p. (observe e.g. that the last unvisited node requires an expected number of $\Theta(n^2)$ steps to be visited).
\end{remark}

\subsection{Impossibility of Counting Without a Leader}
\label{subsec:impossibility}

An immediate question is whether the unique leader assumption of Theorem \ref{the:count-half} can be dropped. Unfortunately, the answer to this question seems to be negative. In particular, it seems that any protocol in which all nodes begin from the same state may have some node terminate with (at least) constant probability having participated in only a constant number of interactions. This implies that with constant probability the protocol terminates without having estimated any non-constant function of $n$.

Nodes again have a set of states $Q$ and we also assume that they have unbounded private local memories. These memories are for internal purposes only and their contents are not communicated to other nodes. For example, a node $u$ could maintain $|Q|$ counters, each counting the number of times the corresponding state has been encountered so far by $u$. We focus on protocols that always terminate (i.e for every $n\geq n_0$, for some finite $n_0$) and we want them to compute something w.h.p., e.g. the node that first terminates to know an upper bound on $n$ w.h.p..

\begin{conjecture} \label{conj:count-impossibility}
Let $\ca$ be a protocol as above. Then, as $n$ grows, there is (at least) a constant probability that some node terminates having interacted only a constant number of times.  
\end{conjecture}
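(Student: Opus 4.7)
The plan is to combine a local causal-history argument with a K\"onig-type compactness observation. Because all nodes start in the same state and transitions are deterministic, the full internal state of a node $v$ after $k$ interactions (its $Q$-state together with its private memory) is a fixed function of $v$'s \emph{causal history}: the rooted labeled tree that records, for each of $v$'s interactions in order, the $Q$-state presented by the partner at meeting time and, recursively, that partner's own earlier causal history. The partner presents one of $|Q|$ symbols, so the set of possible causal histories of depth at most $k$ is finite and independent of $n$.

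In the first step I would quantify the probability that a prescribed finite causal tree $\mathcal{T}$ is realized at some node during the first $\alpha n$ steps, for a small constant $\alpha>0$. By a standard Poisson/branching-process coupling, each node participates in a $\mathrm{Poisson}(2\alpha)$-distributed number of interactions in this window, partners are asymptotically distinct as $n\to\infty$, and partners' histories are asymptotically independent. Hence the causal neighborhood of a fixed node converges in distribution to the family tree of a Galton--Watson branching process, and for any fixed finite $\mathcal{T}$, the probability that \emph{some} node realizes $\mathcal{T}$ is bounded below by a constant $c(\mathcal{T})>0$ independent of $n$.

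In the second step I would exhibit a finite causal tree $\mathcal{T}^*$ whose realization forces the corresponding node into a halting state. Assume the contrary: for every $k$ there is an $n_k$ on which no depth-$\leq k$ causal history at any node halts. Then the set of non-halting finite causal histories forms an infinite, finitely-branching tree (branching bounded by a function of $|Q|$ once one accounts for the partner-subtrees available at each level). By K\"onig's lemma, this tree contains an infinite path, which corresponds to an infinite non-halting local execution of $\ca$ on some large $n$ --- contradicting the always-termination hypothesis. So some fixed $\mathcal{T}^*$ of bounded depth $k^*$ always halts, and combining with Step 1 yields that with probability at least $c(\mathcal{T}^*)>0$ some node halts after only $k^*$ interactions.

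The main obstacle I anticipate is justifying Step 2 in the presence of unbounded private memory: two nodes presenting the same $Q$-state to a third may carry wildly different private contents, so a priori the branching factor of the causal-history tree is infinite rather than $|Q|$. The key reduction is that the private memory is itself a deterministic function of the causal history --- both evolve in lockstep with the node's interactions --- so equivalence of causal histories implies equivalence of memories, and the effective branching factor collapses. Carrying this reduction through formally, and making the Poisson/branching coupling precise despite the mild dependencies between partner-subtrees sharing the same pool of $n$ nodes, are the delicate steps that would have to be executed carefully before K\"onig's lemma can be invoked.
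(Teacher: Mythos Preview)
This statement is a \emph{conjecture}; the paper does not prove it. The surrounding section offers only heuristic evidence and concludes with ``Unfortunately, we have not yet been able to turn this into a formal proof.'' So there is no paper proof to compare against --- your proposal is an attempt to resolve what the paper leaves open.

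Your Step~2 via K\"onig's lemma is both unnecessary and, as stated, unsound. Unnecessary because the existence of a finite halting history is immediate: by hypothesis the protocol halts on some $n_0$, so in some execution some node enters $Q_{halt}$ after finitely many interactions; its causal tree at that moment is your $\mathcal{T}^*$, and since (as you yourself argue) the full internal state is determined by the causal tree, any node realizing $\mathcal{T}^*$ halts --- no compactness needed. Unsound because the prefix-tree of non-halting causal histories is not finitely branching: a depth-$k$ history extends by an interaction with a partner whose own prior history can be any one of infinitely many trees, so K\"onig does not apply. And even if it did, an infinite non-halting \emph{local} trace at one node would not contradict the termination hypothesis, which only guarantees that \emph{some} node halts in each fair execution, not that every node does.

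Step~1 is where the genuine difficulty lies, and it is also where the paper's own heuristic stalls. Note that the paper uses a simpler encoding than your causal tree: because only $Q$-states are communicated, a node's halting is already determined by the observed sequence in $Q^{*}$ (the paper calls it $s_0$), so the infinite-branching worry you raise about private memories is a non-issue for Step~2. Your Galton--Watson/local-weak-convergence route is a reasonable and arguably more systematic attack than the paper's ``all reachable states have multiplicity $\Theta(n)$ for $\Theta(n)$ steps'' sketch, but you have not actually carried it out either --- the subtleties you flag (dependencies among partner subtrees, controlling the time at which a partner presents the required $Q$-state) are exactly the missing piece. If you can make that coupling rigorous you will have proved the conjecture; the K\"onig step should simply be dropped.
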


We now give some evidence why the above conjecture, that excludes the possibility of protocols that count $n$ w.h.p., seems to be true. First of all, observe that a protocol apart from the usual transition function $\delta:Q\times Q\ra Q\times Q$ that updates the communicating states has also a function $\gamma:Q\times S\ra S$ that updates the internal memory based on the encountered states. We focus on deterministic $\gamma$ and in this case the internal state from $S$ after $k$ interactions only depends on the observed sequence $Q^k$ of encountered states (because the initial state $q_0$ is always the same for all nodes). Every protocol $\ca$ that always terminates, essentially defines a property $L_{\ca}\subseteq Q^*$ consisting of those observed sequences that make a node terminate (the remaining sequences do not cause termination). Moreover, as the protocol does not know $n$, an $s_0\in L_{\ca}$ of minimum length has length that is independent of $n$ (it could only be a function of $|Q|$). Observe that for every population size $n$, if $s_0$ is observed by some node $u$ as a prefix of its interaction pattern (i.e. in its first $|s_0|$ interactions) then $u$ terminates while having participated in only $|s_0|$ interactions, which is a constant number independent of $n$. What it seems to hold is that, for every $n\gg n_0$ and every such fixed $s_0$, there is (at least) a constant probability that some node observes $s_0$. In particular we believe that it might be possible to prove the following set of arguments provided that $n\gg n_0$:
\begin{enumerate}
\item With constant probability a configuration is reached, in which every state $q\in Q$ has multiplicity $\Theta(n)$ (that is appears on $\Theta(n)$ distinct nodes).
\item With constant probability the multiplicities of all states remain $\Theta(n)$ for $\Theta(n)$ steps.
\item While (2) holds, with constant probability one of the $\Theta(n)$ nodes, let it be $u$, whose state is $q_0$, interacts $|s_0|$ times.
\end{enumerate}
If the above hold, then it follows that $u$ may observe $s_0$ with constant probability, in which case $u$ will terminate having interacted only a constant (i.e. $|s_0|$) number of times. The reason for this is that in its $i$th interaction, for all $1\leq i\leq |s_0|$, $u$ observes the $i$th state of $s_0$, let it be $q_i$, with probability $(\#q_i \mbox{ in the population})/\Theta(n)$. As, by (2), the numerator is also $\Theta(n)$, for all $q_i\in Q$, the resulting probability is constant. Unfortunately, we have not yet been able to turn this into a formal proof. 

\subsection{Counting Without a Leader but With UIDs}
\label{subsec:uids}

Now nodes have unique ids from a universe $\cu$. Nodes initially do not know the ids of other nodes nor $n$. The goal is again to count $n$ w.h.p.. All nodes execute the same program and no node can initially act as unique leader, because nodes do not know which ids from $\cu$ are actually present in the system. Nodes have unbounded memory but we try to minimize it, e.g. if possible store only up to a constant number of other nodes ids. We show that under these assumptions, the counting problem can be solved without the necessity of a unique leader.

\subsubsection{A Simple Protocol}

We first show feasibility by a very simple protocol which guarantees that the nodes w.h.p. count $n$ by paying a large termination time.

\textbf{Protocol:} Every node $u$ remembers its first $b$ interactions, where $b$ is a predetermined constant. In particular, it maintains a vector v$_u$ of length $b$ and in every interaction $i$, $1\leq i\leq b$, with a node with $v$ it sets v$_u(i)\leftarrow id_v$. Also $u$ counts the number of distinct nodes that it has interacted with so far, by placing their ids in an $A_u$ list. Initially $A_u\leftarrow \{$v$_u\}\cup\{id_u\}$ and in every interaction with a node $v$, $u$ sets $A_u\leftarrow A_u\cup\{id_v\}$. Moreover, after interaction $b$, $u$ keeps track of the ids encountered in every $b$ consecutive interactions, in another vector v$^\prime_u$ of length $b$, initially empty. Whenever a sequence of length $b$ is recorded (i.e. the vector is full), if v$_u$=v$^\prime_u$ then $u$ outputs $|A_u|$ and terminates, otherwise $u$ clears the contents of v$^\prime_u$ and starts recording the next $b$ interactions.

\begin{theorem}
When a node $u$ terminates in the above protocol, w.h.p. $|A_u|=n$. The expected termination time is $b(n-1)^b=\Theta(n^b)$. 
\end{theorem}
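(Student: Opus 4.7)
Under the uniform random scheduler, each step selects a pair uniformly from the $\binom{n}{2}$ pairs, so conditional on $u$ being in the selected pair the partner is uniform over $V\setminus\{u\}$, and distinct steps are independent. The sequence of $u$'s partners is therefore i.i.d.\ uniform over the other $n-1$ ids, and in particular v$_u$ and each of the subsequent blocks recorded in v$^\prime_u$ are i.i.d.\ samples from $(V\setminus\{u\})^b$, an alphabet of size $(n-1)^b$.

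\textbf{Expected termination time.} Fix any realization $x$ of v$_u$. Each later block equals $x$ with probability exactly $p\coleq 1/(n-1)^b$, and different blocks are independent because they are disjoint windows of a memoryless scheduler. The index of the first matching block is therefore geometric with parameter $p$ and has expectation $(n-1)^b$, contributing $b(n-1)^b$ interactions of $u$. The bound is uniform in $x$, so unconditioning on v$_u$ and adding the initial $b$ interactions that built v$_u$ yields an expected termination time of $b + b(n-1)^b = \Theta(n^b)$, as claimed.

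\textbf{Correctness with high probability.} To show that $|A_u|=n$ whenever $u$ terminates, the plan is to rule out the event ``$u$ terminates before meeting every other node.'' Fix a desired constant $c\geq 1$ and set $T^\star=\alpha n\log n$ for a sufficiently large $\alpha=\alpha(c)$. A standard coupon-collector tail bound guarantees that among $u$'s first $T^\star$ partners every id in $V\setminus\{u\}$ appears, except with probability $\leq 1/n^c$. On the other hand, the probability that $u$ terminates within its first $T^\star$ interactions is, by a union bound over the at most $T^\star/b$ complete test blocks that fit inside that window combined with the per-block match probability $p$ established above, at most $(T^\star/b)\cdot p = O(\log n / n^{b-1})$, which is $o(1/n^c)$ as soon as $b \geq c+2$. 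Intersecting the two complementary good events gives the w.h.p.\ claim.

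\textbf{Main obstacle.} The one conceptual subtlety, and what I would be most careful about in writing this up, is that v$_u$ is itself a random vector rather than a preselected target. What makes the argument go through cleanly is that, by the uniform i.i.d.\ structure from the setup, the per-block match probability $1/(n-1)^b$ is the same for \emph{every} realization of v$_u$, so the early-termination union bound can be applied uniformly in v$_u$, i.e.\ as if v$_u$ were deterministic. Once this point is laid out, the remainder is a mechanical combination of coupon-collector and geometric-waiting-time estimates.
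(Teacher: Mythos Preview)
Your argument is correct and in fact more rigorous than the paper's own. The core idea is the same---compare the waiting time until a test block repeats v$_u$ with the coupon-collector time to see every id---but the paper stops at a bare comparison of expectations (``expected termination is $\Theta(n^b)$ while expected meet-everybody is $\Theta(n\log n)$'') and never converts this into a tail bound, whereas you supply explicit concentration estimates on both sides and union-bound them. Your version actually establishes the w.h.p.\ claim; the paper's is closer to a heuristic.

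One genuine point of divergence: you measure ``termination time'' in $u$'s own interactions and obtain $b(n-1)^b$ directly from the geometric waiting time. The paper instead converts $b(n-1)^b$ interactions of $u$ into roughly $bn(n-1)^b$ total scheduler steps (since $u$ participates on average once every $n$ steps), and then divides by $n$ again on the grounds that all $n$ nodes run the protocol in parallel and the first to finish determines the system's termination, arriving back at $b(n-1)^b$ total steps. The final formula coincides, but the quantities are not the same; if the intended unit is global scheduler steps for the first node to halt, your write-up would need that extra conversion-and-minimum step (which is itself handled only informally in the paper, since the minimum of $n$ dependent variables with mean $E$ need not have mean $E/n$).
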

\begin{proof}
Given the initial sequence of length $b$, i.e. v$_u$, the probability that a sequence of $b$ consecutive interactions observed by $u$ is equal to v$_u$ is $1/(n-1)^b$ and thus the expected time for this to occur is $b(n-1)^b$ interactions of $u$ and as $u$ participates on average every $n$ steps, it is a total of $bn(n-1)^b$ interactions. But there are $n$ nodes doing the same independently of one another, thus the actual expected time for one of the nodes to terminate is $b(n-1)^b=\Theta(n^b)$. On the other hand, the expected time for any of the nodes to meet every other node is only $\Theta(n\log n)$.  
\qed
\end{proof}

\subsubsection{An Improved Protocol}

We now give a protocol that improves the expected time to termination still guaranteeing correct counting w.h.p.. The idea is to have the node with the maximum id in the system to perform the same process as the unique leader in the protocol with no ids of Theorem \ref{the:count-half}. Of course, initially all nodes have to behave as if they were the maximum (as they do not know in advance who the maximum is). By comparing ids during an interaction and deactivating the smaller one, we can easily guarantee that eventually only the maximum, $u_{max}$, will remain active. Moreover, it is clear that $u_{max}$ can always win the other nodes in every interaction, so we easily ensure that its process is not affected by the other nodes. However, we must also guarantee that no other node ever terminates (with sufficiently large probability) early, giving as output a wrong count.

\textbf{Informal description:} Every node $u$ has a unique id $id_u$ and tries to simulate the behavior of the unique leader of the protocol of Theorem \ref{the:count-half}. In particular, whenever it meets another node for the first time it wants to mark it once and the second time it meets that node it wants to mark it twice, recording the number of first-meetings and second-meetings in two local counters. The problem is that now many nodes may want to mark the same node. One idea, of course, could be to have a node remember all the nodes that have marked it so far but we want to avoid this because it requires a lot of memory and communication. Instead, we allow a node to only remember a single other node's id at a time.  Every node tries initially to increase its first-meetings counter to $b$ so that it creates an initial $b$ head start of this counter w.r.t. the other. Every node that succeeds starts executing its main process. The main idea is that whenever a node $u$ interacts with another node that either has or has been marked by an id greater than $id_u$, $u$ becomes deactivated and stops counting. This guarantees that only $u_{max}$ will forever remain active. Moreover, every node $u$ always remembers the maximum id that has marked it so far, so that the probabilistic counting process of a node $u$ can only be affected by nodes with id greater than $id_u$ and as a result no one can affect the counting process of $u_{max}$. Protocol \ref{prot:count-ids} puts all these together formally and Theorem \ref{the:count-ids} shows that this process correctly simulates the counting process of Theorem \ref{the:count-half}, thus providing w.h.p. an upper bound on $n$. 

\floatname{algorithm}{Protocol}
\renewcommand{\algorithmiccomment}[1]{// #1}
\begin{algorithm*}%[H]
  \caption{Counting with UIDs}\label{prot:count-ids}
  \begin{algorithmic}[1]
    \Require Every node $u$ has a unique id $id_u$ and maintains a $(belongs,marked)$ pair, a $(count1,count2)$ pair, and a variable $active$, where $belongs\in \cu\cup\{\perp\}$ initially $\perp$, $marked\in\{0,1,2\}$ initially $0$, $count1,count2\in\bbbn_{\geq 0}$ initially $count1=count2=0$ and $active\in\{0,1\}$ initially $1$. All nodes know a predetermined constant $b>0$. The following is the code for every interaction of $u,v$ with $id_u>id_v$.
    \If {$active_v=1$}
        \State $active_v\leftarrow 0$
    \EndIf
    \If {$active_u=1$}
        \If {$belongs_v=\perp$ or $\perp\neq belongs_v<id_u$}
            \State $belongs_v\leftarrow id_u$
	    \State $marked_v\leftarrow 1$
	    \State $count1_u\leftarrow count1_u+1$
        \EndIf
        \If {$\perp\neq belongs_v>id_u$}
	    \State $active_u\leftarrow 0$
        \EndIf
	\If {$belongs_v=id_u$ and $marked_v=1$ and $count1_u\geq b$}
	    \State $marked_v\leftarrow 2$
	    \State $count2_u\leftarrow count2_u+1$
	    \If {$count1_u=count2_u$}
	        \State $u$ halts and outputs $2\cdot count1_u$
	    \EndIf
        \EndIf
    \EndIf
  \end{algorithmic}
\end{algorithm*}

\begin{theorem} \label{the:count-ids}
When a node $u$ in Protocol \ref{prot:count-ids} halts, w.h.p. it holds that $u=u_{max}$ and that $2\cdot count1_u \geq n$.
\end{theorem}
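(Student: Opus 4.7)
The plan is to reduce this result to Theorem \ref{the:count-half} by establishing two facts, each with high probability: the unique node with the maximum id, $u_{max}$, is the only node that halts, and its counting process faithfully simulates the leader-based Counting-Upper-Bound protocol of Section \ref{subsec:leader-counting}. Accordingly, the proof splits into two parts. First, I would show that $u_{max}$'s counters evolve exactly as the leader's, so the bound $2\cdot count1_{u_{max}} \geq n$ follows directly from Theorem \ref{the:count-half}. Second, I would show that w.h.p.\ no non-max node reaches the halting condition $count1_u = count2_u \geq b$ before being deactivated.

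For the first part, I would point out two structural facts about $u_{max}$. First, $u_{max}$ can never be deactivated, because the deactivation branch requires $belongs_v > id_u$ and no $belongs_v$ can ever exceed $id_{u_{max}}$. Second, any mark placed by $u_{max}$ can never be overwritten, because the overwrite branch requires the incoming id to strictly exceed the current $belongs_v$. Consequently, the subsequence of interactions involving $u_{max}$ maps one-to-one onto an execution of the Counting-Upper-Bound protocol: each encounter with a partner whose current $belongs$ value is either $\perp$ or strictly less than $id_{u_{max}}$ corresponds to a fresh $q_0 \to q_1$ conversion, and each re-encounter with one of $u_{max}$'s level-$1$ marks corresponds to a $q_1 \to q_2$ conversion; the $count1 \geq b$ guard precisely plays the role of the $b$-step head start. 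Theorem \ref{the:count-half} then yields that when $u_{max}$ halts, $count1_{u_{max}} \geq n/2$ w.h.p.

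For the second part, I would fix any non-max node $u$ and split the event that $u$ halts into the sub-cases $count1_u < n/2$ and $count1_u \geq n/2$ at the halting step. In the small-count sub-case, I would rerun the random-walk argument of Theorem \ref{the:count-half} on $u$'s own counters; interactions of $u$ with partners carrying a mark from some higher id either leave both of $u$'s counters unchanged (when $u$ is merely overriding a smaller-id mark) or abort the halting event via deactivation, so conditioning on $u$ not being deactivated produces a walk stochastically dominated by the one in Theorem \ref{the:count-half}, yielding a failure probability of $O(n^{-(b-2)})$. In the large-count sub-case, $u$ has performed $\Omega(n)$ distinct fresh markings, which, by coupon-collector-type reasoning, requires $\Omega(n^2)$ scheduler steps both in expectation and w.h.p.; within such a window the number of nodes already marked by $u_{max}$ is $\Theta(n)$, so the probability that none of $u$'s partners ever carries a higher mark is exponentially small. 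A union bound over the at most $n-1$ non-max nodes completes the argument.

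The main obstacle I foresee is the large-count sub-case of Part (II): rigorously bounding the ``race'' probability that some non-max node marks $\Omega(n)$ distinct fresh nodes while dodging every node ever touched by a higher-id node, and $u_{max}$ in particular. This will require a careful coupling between the growth of $u_{max}$'s marking frontier and the schedule of $u$'s own interactions, for example via a Chernoff-type tail bound on the size of $u_{max}$'s marked set over long time windows combined with the uniform choice of $u$'s partner at each of its interactions.
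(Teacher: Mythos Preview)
Your first part matches the paper's. For the second part, the paper's route is simpler and sidesteps your self-identified main obstacle entirely. Fixing any $w\neq u_{max}$, the paper observes that $S_{w,1}$ (the nodes $w$ has marked once) can only grow through $w$'s own actions, so the only possible adverse interference is a node $v\in S_{w,0}$ being captured by a higher id before $w$ reaches it---but then $belongs_v>id_w$ holds forever, so $w$'s next encounter with $v$ \emph{deactivates} $w$ rather than letting it continue with worse odds. In other words, interference converts would-be $count1_w$-increments into deactivations, never into $count2_w$-increments; domination by the walk of Theorem~\ref{the:count-half} is therefore immediate for every $w$, with no case split on the size of $count1_w$. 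The paper accordingly does not try to rule out a non-max node halting with $count1\geq n/2$; it only argues that no node halts with $count1<n/2$, which is what the downstream application needs. Your large-count race argument would establish the literally stated ``$u=u_{max}$'', but it is extra work the paper simply omits.

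A minor caution on your small-count case: the phrasing ``conditioning on $u$ not being deactivated yields a dominated walk'' is a slightly treacherous formulation, since conditioning on avoiding the deactivation set can in principle inflate the relative weight of $count2$-increments among the remaining outcomes. The paper's version is unconditional: at every step the absolute probability of a $count2_w$-increment is at most $j/(n-1)$ (because $|S_{w,1}|\leq j=count1_w-count2_w$), exactly as in the ideal walk, while some of the remaining mass has shifted from $count1_w$-increments to deactivation. That alone yields $\P[w\text{ halts with }count1_w<n/2]\leq\P[\text{ideal walk hits }0]$, without any conditioning.
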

\begin{proof}
We first show that $u_{max}$ simulates the probabilistic process of the unique leader $l$ of Theorem \ref{the:count-half}. Recall that in the protocol of Theorem \ref{the:count-half}, all other nodes are initially $q_0$ and when $l$ meets a $q_0$ it makes it $q_1$ and when it meets a $q_1$ it makes it $q_2$, every time counting in the corresponding counter. First, observe that $u_{max}$ is never deactivated, i.e. $active_{u_{max}}=1$ forever, because it never interacts with a greater id nor with a node that belongs to a greater id than its own. It suffices to show that when $u_{max}$ meets a node for the first time it marks it once (simulating a $q_0$ to $q_1$ conversion), when it meets a node for the second time it marks it twice (simulating a $q_1$ to $q_2$ conversion), and that no other node can ever alter the nodes marked by $u_{max}$. When $u_{max}$ interacts with a node $v$ for the first time, then either $belongs_v=\perp$ or $\perp\neq belongs_v < max\_id$. So, in this case it marks $v$ once by setting $marked_v\leftarrow 1$, $belongs_v\leftarrow max\_id$, and records this by increasing $count1_{u_{max}}$ by one. From now on, no other active node $w\neq u_{max}$ can ever affect the state of $v$, because for every such $w$ it holds that $id_w<belongs_v=max\_id$ and the only effect in this case is the deactivation of $w$. The second time that $u_{max}$ interacts with $v$, it still holds that $belongs_v=id_{u_{max}}$($=max\_id$) and $marked_v=1$, and $u_{max}$ marks $v$ for a second time by setting $marked_v\leftarrow 2$ and records this by incrementing $count2_{u_{max}}$ by one. Again, $v$ still belongs to $max\_id$ and no other node can ever affect its state. We conclude that if we were only interested in $u_{max}$'s output then, by Theorem \ref{the:count-half}, this would w.h.p. be an upper bound on $n$. 

However, observe that not only $u_{max}$ but also the other nodes execute a similar process and it could be the case that one of them terminates early (and before $u_{max}$) giving as output a wrong count. We now show that this is not the case. Take any node $w$ with $id_w<max\_id$. Consider the partition of $V\bs\{w\}$ into the sets $S_{w,0}$, $S_{w,1}$, and $S_{w,2}$ of nodes which $w$ has not marked yet, has marked once, and has marked twice, respectively. $S_{w,1}$ cannot increase without $w$ being involved, so the only possibility that may increase the rate of growth of $count2_w$ w.r.t. $count1_w$ is when a node $v\in S_{w,0}$ gets marked by a node with id greater than $id_w$, because such a $v$ can no longer contribute to $count1_w$. However, observe that every such $v$ will from now on forever satisfy $belongs_v>id_w$, because $belongs_v$ can only increase and every interaction of $w$ with such a $v$ will result in the deactivation of $w$. This implies that the ``success'' events of $w$ (those corresponding to a $count1_w$ increment) have now been partitioned into increment events and deactivation events. So, if $w$ ever fails to increment $count1_w$ due to an interference of some $u$ with $id_u>id_w$ on some $v\in S_{w,0}$, the effect is the deactivation of $w$, which clearly does not allow $w$ to continue with unfavorable probabilities. 
\qed
\end{proof}

\section{Generic Constructors}
\label{sec:gen-con}

In this section, we give a characterization for the class of constructible 2D shape languages. In particular, we establish that shape constructing TMs (defined in Section \ref{sec:model}), can be simulated by our model and therefore we can realize their output-shape in the actual distributed system. To this end, we begin in Section \ref{subsec:counting-line} by adapting the Counting-Upper-Bound protocol of Section \ref{sec:counting} to work in our model. The result is, again w.h.p., a line of length $\Theta(\log n)$ with a unique leader containing $n$ in binary. Then, in Section \ref{subsec:repl-square} the leader exploits its knowledge of $n$ to construct a $\sqrt{n}\times\sqrt{n}$ square. In the sequel (Section \ref{subsec:simulation}), it simulates the TM on the square $n$ distinct times, one for each pixel of the square. Each time, the input provided to the TM is the index of the pixel and $\sqrt{n}$, both in binary. Each simulation decides whether the corresponding pixel should be \emph{on} or \emph{off}. When all simulations have completed, the leader releases in the solution, in a systematic way, the connected shape consisting of the \emph{on} pixels and the active edges between them. The connections of all other (\emph{off}) pixels become deactivated and the corresponding nodes become free (isolated) nodes in the solution. 

\subsection{Storing the Count on a Line}
\label{subsec:counting-line}

We begin by adapting the Counting-Upper-Bound protocol of Theorem \ref{the:count-half} so that when the protocol terminates the final correct count is stored distributedly in binary on an active line of length $\log n$.\\

\noindent\textbf{Counting-on-a-Line Protocol:} The probabilistic process that is being executed is essentially the same as that of the Counting-Upper-Bound protocol. Again the protocol assumes a unique leader that forever controls the process. A difference now is that every node has four ports (in the 2D case). The leader operates as a TM that stores the $r_0$ and $r_1$ counters in its tape in binary. The $i$th cell of the tape has two components, one storing the $i$th bit of $r_0$ and the other storing the $i$th bit of $r_1$. We say that the tape is \emph{full}, if the bits of all $r_0$ components of the tape are set to 1. The tape of the TM is the active line that the leader has formed so far, each node in the line implementing one cell of the tape. Initially the tape consists of a single cell, stored in the memory of the unique leader node. As in Counting-Upper-Bound, the leader first tries to obtain an initial advantage of $b$ for the $r_0$ counter. To achieve the advantage, the leader does not count the $q_1$s that it interacts with until it holds that $r_0\geq b$. Observe that the initial length of the tape is not sufficient for storing the binary representation of $b$ (of course $b$ is constant so, in principle, it could be stored on a single node, however we prefer to keep the description as uniform as possible). In order to resolve this, the leader does the following. Whenever it meets the left port of a $q_0$ from its right port, if its tape is not full yet, it switches the $q_0$ to $q_1$, leaving it free to move in the solution, and increases the $r_0$ counter by one. To increase the counter, it freezes the probabilistic process (that is, during freezing it ignores all interactions with free nodes), and starts moving on its tape, which is a distributed line attached to its left port. After incrementing the counter, the leader keeps track of whether the tape is now full and then it moves back to the right endpoint of the line to unfreeze and continue the probabilistic process. On the other hand, if the tape is full, it binds the encountered $q_0$ to its right by activating the connection between them (thus increasing the length of the tape by one), then it reorganizes the tape, it again increases $r_0$ by one, and finally moves back to the right endpoint to continue the probabilistic process. This time, the leader also records that it has bound a $q_0$ that should have been converted to $q_1$. This \emph{debt} is also stored on the tape in another counter $r_2$. Whenever the leader meets a $q_2$, if $r_2\geq 1$, it converts $q_2$ to $q_1$ and decreases $r_2$ by one. So, $q_2$s may be viewed as a \emph{deposit} that is used to pay back the debt. In this manner, the $q_0$s that are used to form the tape of the leader are not immediately converted to $q_1$ when first counted. Instead, the missing $q_1$s are introduced at a later time, one after every interaction of the leader with a $q_2$, and all of them will be introduced eventually, when a sufficient number of $q_2$s will become available. Finally, whenever the leader interacts with the left port of a $q_1$ from its right port, it freezes, increases the $r_1$ counter by one (observe that $r_0\geq r_1$ always holds, so the length of the tape is always sufficient for $r_1$ increments), and checks whether $r_0=r_1$. If equality holds, the leader terminates, otherwise it moves back to the right endpoint and continues the process. Correctness is captured by the following lemma.

\begin{lemma} \label{lem:counting-line}
Counting-on-a-Line protocol terminates in every execution. Moreover, when the leader terminates, w.h.p. it has formed an active line of length $\log n$ containing $n$ in binary in the $r_0$ components of the nodes of the line (each node storing one bit).
\end{lemma}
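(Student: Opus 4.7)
The plan is to treat Counting-on-a-Line as a mechanical refinement of Counting-Upper-Bound and to reduce the statement to Theorem~\ref{the:count-half}. Termination in every execution is almost immediate: the leader halts exactly when $r_0 = r_1$, and $r_0$ is a monotone counter bounded by $n$, so under any fair schedule the random walk on $r_0 - r_1$ lives in the bounded interval $[0,n]$ and is absorbed at $0$ in finite time. The tape-expansion and binary counter-update routines each take only finitely many interactions because the tape never exceeds length $\lceil \log_2(n+1) \rceil$.

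The heart of the proof is a simulation invariant, established by induction on interactions, showing that the protocol faithfully reproduces the abstract $(r_0,r_1)$ dynamics of Counting-Upper-Bound. I would prove that at every configuration: (i) the line length equals $\lceil \log_2(r_0+1) \rceil$ and stores $r_0$ and $r_1$ in the two bit-components of its cells; (ii) the number of free $q_1$s equals $r_0 - r_1 - r_2$; and (iii) the number of free $q_2$s equals $r_1 - d$, where $d$ is the total debt repaid so far. Invariant (ii) is precisely the purpose of the debt counter: every $q_0$ absorbed into the tape (rather than converted in place) increments $r_2$, and each later $q_2 \to q_1$ debt-repayment restores a $q_1$ to the free pool without touching $r_0$ or $r_1$. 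Consequently, projecting the execution onto the sub-sequence of interactions that alter $(r_0,r_1)$ yields exactly the transition structure of Counting-Upper-Bound.

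It then remains to transfer the probabilistic analysis. By the well-mixed assumption, conditioned on the next leader-interaction being one that changes $(r_0,r_1)$, $\P[\text{meets a } q_0] = (n-1-r_0)/(n-1-r_1-r_2)$, which differs from the $(n-1-r_0)/(n-1-r_1)$ probability used in Counting-Upper-Bound only by the extra $r_2$ term. Invariant (i) bounds $r_2 \leq \lceil \log_2(n+1) \rceil$, so the perturbation of the forward/backward probabilities of the underlying random walk is $O(\log n / n)$, which is easily absorbed into the constants of the Boole--Bonferroni bound in the proof of Theorem~\ref{the:count-half}. The conclusion $r_0 \geq n/2$ with probability at least $1 - 1/n^{c}$ is therefore preserved, and combined with invariant (i) this yields the claim: the line has length $\lceil \log_2(r_0+1) \rceil = \Theta(\log n)$ and stores in binary a value between $n/2$ and $n$.

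The most delicate step, and the one I would handle most carefully, is arguing that the leader's \emph{freezing} periods (during which it walks on its tape and ignores all scheduler-selected free-node interactions) do not bias the conditional distribution used above. The cleanest route is to observe that the uniform random scheduler is memoryless over interactions, so the sub-schedule consisting of post-freezing leader--free-node interactions is itself i.i.d.\ uniform over the currently permissible pairs, which is exactly the distribution assumed in the proof of Theorem~\ref{the:count-half}. Making this rigorous under the geometric restrictions of our model (where permissibility depends on which of the leader's ports are free and on the free nodes' orientations and positions relative to them) is where I expect the bulk of the technical care to be needed.
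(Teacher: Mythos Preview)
Your reduction to Theorem~\ref{the:count-half} and the invariants you set up are correct and match the paper's strategy. However, there are two points where your treatment diverges from the paper's, one of which is a genuine gap.

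\textbf{Termination.} Your claim that termination is ``almost immediate'' because the walk on $r_0-r_1$ is bounded overlooks the debt mechanism. For the walk to be absorbed at $0$, you need $r_1$ to be incrementable whenever $r_0>r_1$; but $r_1$ increments only when the leader meets a free $q_1$, and your own invariant~(ii) gives the number of free $q_1$s as $r_0-r_1-r_2$, which can be $0$ while $r_0>r_1$. The missing argument is that the debt $r_2$ can always eventually be repaid from the supply of $q_2$s. The paper handles this by the inequality $r_0-\lfloor\lg r_0\rfloor\geq\lfloor\lg r_0\rfloor\geq r_2$: the line has length $\lfloor\lg r_0\rfloor+1$, so at least $r_0-\lfloor\lg r_0\rfloor$ of the counted $q_0$s were genuinely converted to $q_1$ and will eventually become $q_2$s, and this deposit always covers the debt. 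Without this step, ``absorbed at $0$ in finite time'' is not justified.

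\textbf{Probability transfer.} Your perturbation argument is correct but unnecessarily heavy. Look again at your own formula: $(n-1-r_0)/(n-1-r_1-r_2)$ is \emph{larger} than $(n-1-r_0)/(n-1-r_1)$ because $r_2\geq 0$ shrinks the denominator. Delaying the introduction of $q_1$s can only \emph{increase} the forward probability of the walk, so the failure probability of Counting-on-a-Line is stochastically dominated by that of Counting-Upper-Bound, and Theorem~\ref{the:count-half} applies directly with nothing to absorb into constants. This is exactly the paper's observation that the delay ``only decreases the probability of early termination.'' Your route would go through, but the domination is both simpler and yields a cleaner conclusion (the bound is actually improved, not merely preserved up to constants). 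The paper also disposes of the port/geometry issue in one line---the extra port choices scale the numerator and denominator by the same factor---rather than flagging it as the ``bulk of the technical care.''
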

\begin{proof}
We begin by showing that the probabilistic process of the Counting-Upper-Bound protocol is not negatively affected in the Counting-on-a-Line protocol. This implies that the high probability argument of Theorem \ref{the:count-half} holds also for Counting-on-a-Line (in fact it is improved). First of all, observe that the four ports of the nodes introduce more choices for the scheduler in every step. However, these new choices, if treated uniformly, result in the same multiplicative factor for both the ``positive'' (an $(l,q_0)$ interaction) and the ``negative'' (an $(l,q_1)$ interaction) events, so the probabilities of the process are not affected at all by this. Moreover, neither the debt affects the process. The reason is that the only essential difference w.r.t. to the process is that the conversion of some counted $q_0$s to the corresponding $q_1$s is delayed. But this only decreases the probability of early termination and thus of failure. It remains to show that not even a single $q_1$ remains forever as debt, because, otherwise, some executions of the protocol would not terminate. The reason is that the protocol cannot terminate before converting all the $q_1$s plus the debt to $q_2$. To this end, observe that the line of the leader has always length $\lfloor\lg r_0\rfloor+1$, thus $r_2\leq \lfloor\lg r_0\rfloor$, because the debt is always at most the length of the line excluding the initial leader. So, at least $r_0-\lfloor\lg r_0\rfloor$ nodes have been successfully converted from $q_0$ to $q_1$ which implies that there is an eventual deposit of at least $r_0-\lfloor\lg r_0\rfloor$ nodes in state $q_2$. These $q_2$s are not immediately available, but they will for sure become available in the future, because every interaction of the leader with a $q_1$ results in a $q_2$. Finally, observe that $r_0-\lfloor\lg r_0\rfloor\geq \lfloor\lg r_0\rfloor$ holds for all $r_0\geq 1$ (to see this, simply rewrite it as $r_0/2\geq \lfloor\lg r_0\rfloor$). Thus, $r_0-\lfloor\lg r_0\rfloor\geq r_2$, which means that the eventual deposit is not smaller than the debt, so the protocol eventually pays back its debt and terminates.
\qed
\end{proof}

\subsection{Constructing a $\sqrt{n}\times\sqrt{n}$ Square}
\label{subsec:repl-square}

We now show how to organize the nodes into a spanning square, i.e. a $\sqrt{n}\times\sqrt{n}$ one. As we did in Section \ref{subsec:stab-square}, we again assume for simplicity that $\sqrt{n}$ is integer. Observe that now the leader has $n$ stored in its line. Our construction exploits this knowledge and this makes it essentially different than the constructions of Section \ref{subsec:stab-square}. Moreover, knowledge of $n$ allows the protocol to terminate after constructing the square and to know that the square has been successfully constructed, a fact that was not the case in the stabilizing constructions of Section \ref{subsec:stab-square}. The following protocol assumes that the guarantee of Lemma \ref{lem:counting-line} is provided somehow and based on this assumption we will show that it works correctly in every execution (this is in contrast to the high probability argument of Lemma \ref{lem:counting-line}). This means that given the guarantee, the protocol that constructs the square is always correct. Of course, if we take the composition of Counting-on-a-Line that provides the guarantee and the protocol that constructs the square based on the guarantee, the resulting protocol is again correct w.h.p., however we still allow the possibility that some other deterministic (even centralized) preprocessing provides the required guarantee.\\

\noindent\textbf{Square-Knowing-\emph{n} Protocol:} The initial leader $L$ first computes $\sqrt{n}$ on its line by any plausible algorithm (observe that the available space for computing the square root is exponential in the binary representation of $n$, which is the input to the algorithm, because, if needed, the leader can expand its line up to length $n$). In principle, it is not necessary to use additional space, because the leader can execute one after the other the multiplications $1\cdot 1$, $2\cdot 2$, $3\cdot 3$, $\ldots$ in binary until the result becomes equal to $n$. Each of these operations can be executed in the initial $\log n$ space of the line of the leader. The time needed, though exponential in the binary representation of $n$, is still linear in the population size $n$. Now that the leader also knows $\sqrt{n}$, it expands its line to the right by attaching free nodes to make its length $\sqrt{n}$. Then it exploits the down ports to create a replica of its line. The replica has also length $\sqrt{n}$ and has its own leader but in a distinguished state $L_s$. This new line plays the role of a \emph{seed} that starts creating other self-replicating lines of length $\sqrt{n}$. In particular, the seed attaches free nodes to its down ports, until all positions below the line are filled by nodes and additionally all horizontal connections between those nodes are activated. Then it introduces a leader $L_r$ to one endpoint of the replica and starts deactivating the vertical connections to release the new line of length $\sqrt{n}$. These lines with $L_r$ leaders are \emph{totally self-replicating}, meaning that their children also begin in state $L_r$. The initial leader $L$ waits until the up ports of a non-seed replica $r$ become totally aligned with the down ports of the square segment that has been constructed so far. So, initially it waits until a replica becomes attached to the lower side of its own line. When this occurs, it activates all intermediate vertical connections to make the construction rigid and increments a row-counter by one (initially 0) and moves to the new lowest row. If at the time of attachment $r$ was in the middle of an incomplete replication, then there will be nodes attached to the down ports of $r$. $L$ releases all these nodes, by deactivating the active connections of $r$ to them, and then waits for another non-seed replica to arrive. When the row-counter becomes equal to $\sqrt{n}-1$, the leader for the first time accepts the attachment of the seed to its construction and when the seed is successfully attached the leader terminates. This completes the construction of the $\sqrt{n}\times\sqrt{n}$ square. See Figures \ref{fig:square-terminating1} and \ref{fig:square-terminating2} for illustrations.

The reason for attaching the seed last, and in particular when no further free nodes have remained, is that otherwise self-replication could possibly cease in some executions. Observe also that we have allowed the $L$-leader to accept the attachment of a replica to the square segment even though the replica may be in the middle of an incomplete replication. This is important in order to avoid reaching a point at which some free lines are in the middle of incomplete replications but there are no further free nodes for any of them to complete. For a simple example, consider the seed and a replica $r$ and $\sqrt{n}$ free nodes (all other nodes have been attached to the square segment). It is possible that $\sqrt{n}-1$ of the free nodes become attached to the seed and the last free node becomes attached to $r$. We have overcome this deadlock by allowing $L$ to accept the attachment of $r$ to the square segment. When this occurs, the free node will be released and eventually it will be attached to the last free position below the seed.

\begin{figure}[!hbtp]
\centering{
\includegraphics[width=0.65\textwidth]{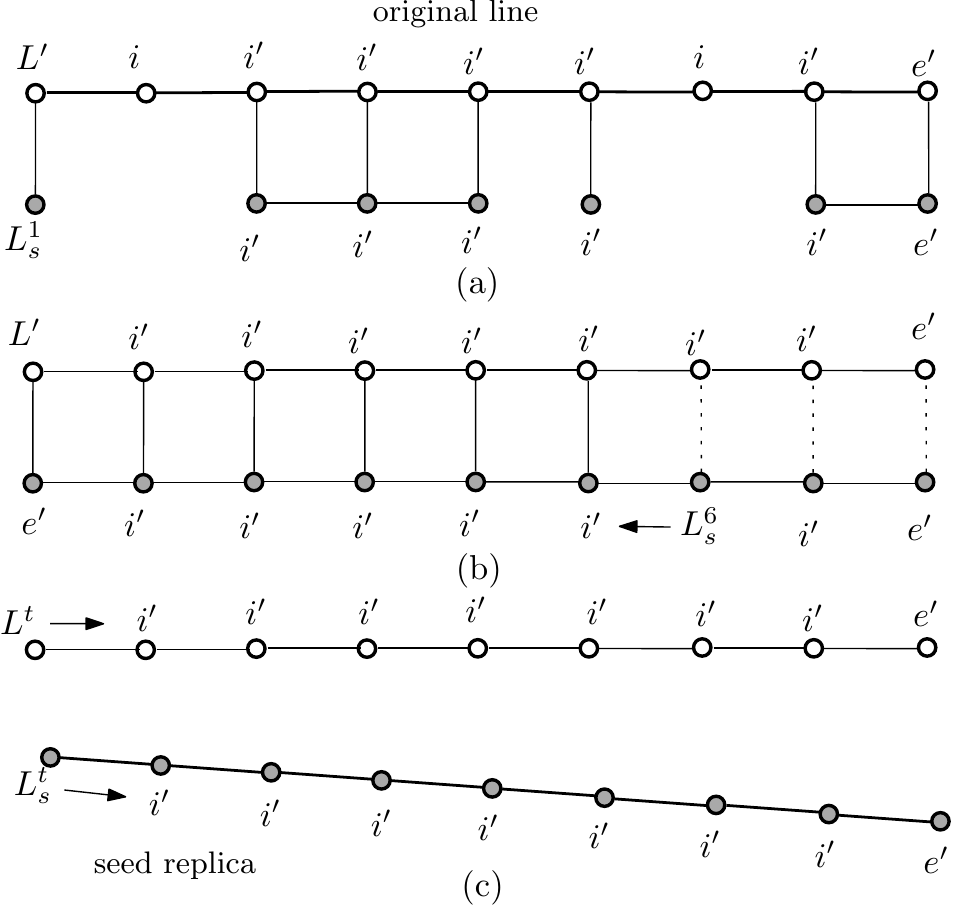}
}
\caption{(a) Several free nodes have already been attached to the original line. Some of them have already activated some horizontal connections forming some segments of the replica. (b) The leader ($L^\prime$) of the original line remains blocked while the leader ($L_s^6$) of the replica has detected that the replica is ready for detachment. It has already detached the three rightmost nodes and keeps moving to the left until it reaches the left endpoint and detaches the whole replica. (c) The seed replica has been released in the solution. The leader ($L^t$) of the original line has waken up and is restoring the nodes of its line to their original states. When it finishes (that is, when it will have traversed the whole line and have returned to the left endpoint), it will go to state $L_{start}$ to start the square formation process. Similarly, the leader ($L_s^t$) of the seed replica is setting the nodes of its line to their normal $i$ and $e$ states, so that they start accepting the attachment of other nodes in order to create non-seed replicas.} \label{fig:square-terminating1}
\end{figure}

\begin{figure}[!hbtp]
\centering{
\includegraphics[width=0.65\textwidth]{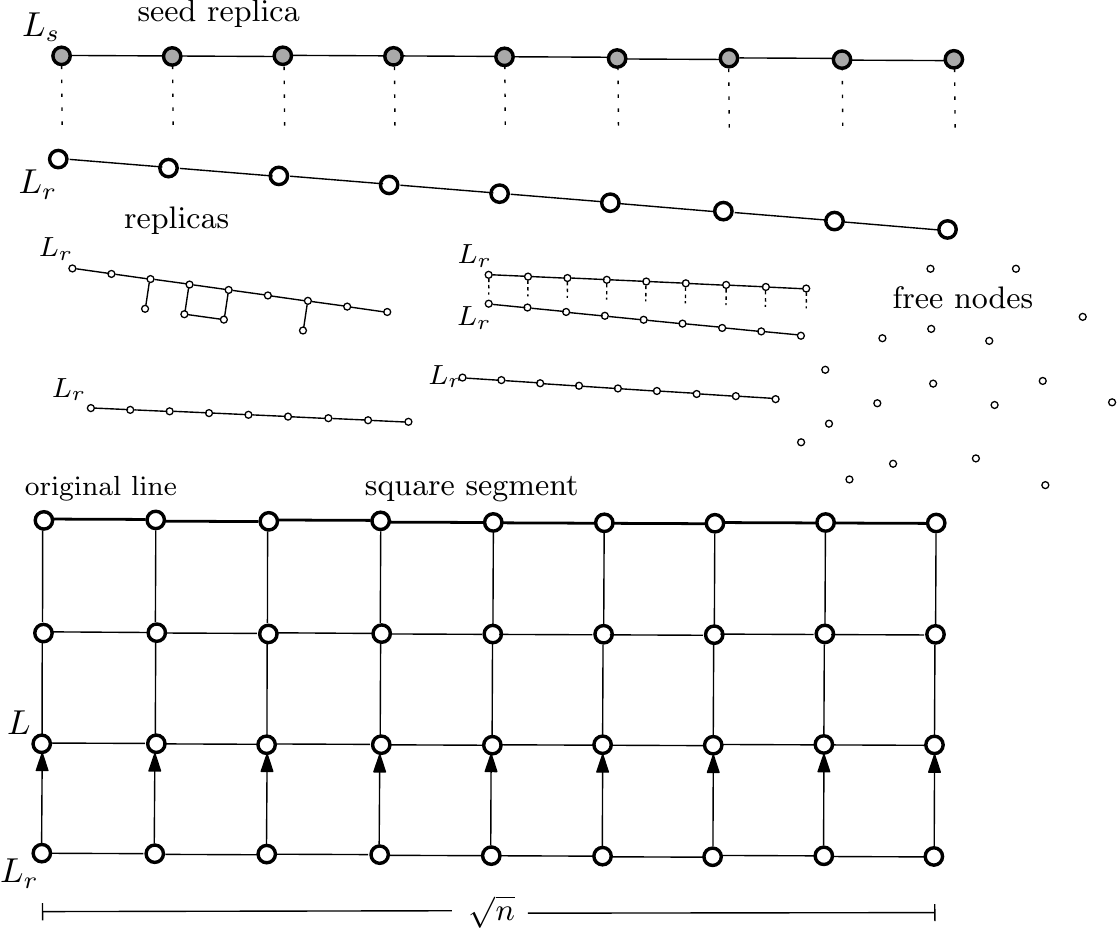}
}
\caption{The seed at the top has created another replica which has just been released in the solution. Below it, some additional replicas appear. One of them is in the middle of a replication that has not completed yet. There are also several nodes that are still free. At the bottom appears the square segment that has been constructed so far. The original line of the $L$-leader is the one at the top of the rectangle. The other rows below it have been formed by replicas that have been attached to the segment in previous steps. The $L$-leader keeps waiting at the bottom left corner for new replicas to arrive. One such has just arrived and will be attached to the segment.} \label{fig:square-terminating2}
\end{figure}

We now give, in Protocol \ref{prot:line-replication}, one of the possible codes for the replication process of the original leader's line that creates the seed. The other replication processes, i.e. of $L_s$ to $L_r$ and of $L_r$ to $L_r$, are almost identical to this one. Without loss of generality we assume that the original leader's line has state $L$ on its left endpoint, $e$ on its right endpoint, and every other internal node of the line is in state $i$. All other (free) nodes are in state $q_0$.

\floatname{algorithm}{Protocol}
\renewcommand{\algorithmiccomment}[1]{// #1}
\begin{algorithm}[!h]
  \caption{\emph{Line-Replication}}\label{prot:line-replication}
  \begin{algorithmic}
    \medskip
    \State $Q=\{L,L^\prime,L_s^j,L_s^t,L_s^{t^{\prime}},L_s^{t^{\dprime}},L^t,L^{t^{\prime}},L^{t^{\dprime}},$ $L_{start},i,i^{\prime},e,e^{\prime}\}$, $j\in\{1,2,\ldots,7\}$
    \State $\delta$: 
    \begin{align*}
    (L,d),(q_0,u),0&\ra (L^\prime,L_s^1,1)\\
    (i,d),(q_0,u),0&\ra (i^\prime,i^\prime,1)\\
    (e,d),(q_0,u),0&\ra (e^\prime,e^\prime,1)\\
    (i^\prime,r),(i^\prime,l),0&\ra (i^\prime,i^\prime,1)\\
    (i^\prime,r),(e^\prime,l),0&\ra (i^\prime,e^\prime,1)\\
    (L_s^1,r),(i^\prime,l),0&\ra (e^\prime,L_s^2,1)\\    			    		    
    (L_s^2,r),(i^\prime,l),\cdot&\ra (i^\prime,L_s^2,1)\\    			    		    
    (L_s^2,r),(e^\prime,l),\cdot&\ra (i^\prime,L_s^3,1)\\
    (L_s^3,u),(e^\prime,d),1&\ra (L_s^4,e^\prime,0)\\
    (i^\prime,r),(L_s^4,l),1&\ra (L_s^5,e^\prime,1)\\
    (L_s^5,u),(i^\prime,d),1&\ra (L_s^6,i^\prime,0)\\
    (i^\prime,r),(L_s^6,l),1&\ra (L_s^5,i^\prime,1)\\
    (e^\prime,r),(L_s^6,l),1&\ra (L_s^7,i^\prime,1)\\
    (L_s^7,u),(L^{\prime},d),1&\ra (L_s^t,L^t,0)\\
    (x^t,r),(i^\prime,l),1&\ra (e^\prime,x^{t^\prime},1), x\in\{L,L_s\}\\
    (x^{t^\prime},r),(i^\prime,l),1&\ra (i^\prime,x^{t^\prime},1), x\in\{L,L_s\}\\
    (x^{t^\prime},r),(e^\prime,l),1&\ra (x^{t^\dprime},e,1), x\in\{L,L_s\}\\
    (i^\prime,r),(x^{t^\dprime},l),1&\ra (x^{t^\dprime},i,1), x\in\{L,L_s\}\\
    (e^\prime,r),(L_s^{t^\dprime},l),1&\ra (L_s,i,1)\\
    (e^\prime,r),(L^{t^\dprime},l),1&\ra (L_{start},i,1)                                    
    \phantom{\hspace{10cm}}
    \end{align*}
%    \State \Comment {All transitions that do not appear have no effect}
  \end{algorithmic}
\end{algorithm}

We additionally show that, in principle, the lines do not need a leader in order to successfully self-replicate. We give such a protocol which is ``more parallel'' and has a much more concise description than the previous one. We  now assume that one line has $e$ on both of its endpoints and $i$ on the internal nodes, and every (free) node is in state $q_0$. The code is presented in Protocol \ref{prot:no-leader-lr}. The protocol works as follows. Free nodes are attached below that nodes of the original line. When a node is attached below an internal node $i$ both become $i_1$ and when a node is attached below an endpoint $e$, both become $e_1$. Moreover, adjacent nodes of the replica connect to each other and every such connection increases their index. In fact, their index counts their degree. An internal node of the replica can detach from the original line only when it has degree 3, that is when, apart from its vertical connection, it has also already become connected to both a left and a right neighbor on the replica. On the other hand, an endpoint detaches when it has a single internal neighbor. It follows that the replica can only detach when its length (counted in number of horizontal active connections) is equal to that of the original line. To see this, assume that a shorter line detaches at some point. Clearly, such a line must have at least one endpoint that corresponds to an internal node $i_j$ of the replica. But this node is an endpoint of the shorter line, so its degree is less than 3, i.e. $j<3$, and we conclude that it cannot have detached. 

\floatname{algorithm}{Protocol}
\renewcommand{\algorithmiccomment}[1]{// #1}
\begin{algorithm}[!h]
  \caption{\emph{No-Leader-Line-Replication}}\label{prot:no-leader-lr}
  \begin{algorithmic}
    \medskip
    \State $Q=\{q_0,e,e_1,i,i_1,i_2,i_3\}$
    \State $\delta$: 
    \begin{align*}
    (i,d),(q_0,u),0&\ra (i_1,i_1,1)\\
    (e,d),(q_0,u),0&\ra (e_1,e_1,1)\\
    (i_j,r),(i_k,l),0&\ra (i_{j+1},i_{k+1},1) \text{ for all } j,k\in\{1,2\}\\
    (i_1,r),(e_1,l),0&\ra (i_2,e_2,1)\\
    (i_2,r),(e_1,l),0&\ra (i_3,e_2,1)\\    
    (e_1,r),(i_1,l),0&\ra (e_2,i_2,1)\\
    (e_1,r),(i_2,l),0&\ra (e_2,i_3,1)\\
    (i_3,u),(i_1,d),1&\ra (i,i,0)\\
    (e_2,u),(e_1,d),1&\ra (e,e,0)                                                   
    \phantom{\hspace{10cm}}
    \end{align*}
%    \State \Comment {All transitions that do not appear have no effect}
  \end{algorithmic}
\end{algorithm}

\begin{lemma} \label{lem:square}
There is a protocol (described above) that when executed on $n$ nodes (for all $n$ with integer $\sqrt{n}$) w.h.p. constructs a $\sqrt{n}\times \sqrt{n}$ square and terminates.
\end{lemma}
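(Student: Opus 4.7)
The plan is to combine the high-probability guarantee from Lemma \ref{lem:counting-line} with an (almost) deterministic correctness argument for the construction phase proper. First, invoke the Counting-on-a-Line protocol as a subroutine; by Lemma \ref{lem:counting-line}, with probability at least $1-1/n^{c}$ the leader ends up on an active line of length $\lceil\log n\rceil$ storing $n$ in binary on its $r_{0}$ components. Condition on this event for the remainder of the argument. Once $n$ is stored, the leader computes $\sqrt{n}$ by iterated squaring $1\cdot 1,2\cdot 2,\ldots$ in its own tape space; since $\sqrt{n}$ is integer by hypothesis, this TM-style subroutine halts with $\sqrt{n}$ written on the line.

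Next, the leader extends its line to length $\sqrt{n}$ by attaching free $q_{0}$ nodes through its right port, and then fires Protocol \ref{prot:line-replication} to spawn the seed line of length $\sqrt{n}$ below it. By inspection of the protocol, every phase (downward attachment, activation of horizontal connections, sweeping to deactivate vertical edges, state reset) consists of a finite chain of rules, each requiring a single specific interaction that is permitted by the current shape configuration; hence under fairness of the random scheduler (which holds with probability $1$ and so does not further erode the high-probability bound) the chain completes almost surely. The same argument, applied inductively, shows that the seed in state $L_{s}$ produces arbitrarily many non-seed replicas in state $L_{r}$ so long as free $q_{0}$ nodes remain in the solution, and that each such replica is forced by the protocol's structure to have length exactly $\sqrt{n}$.

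The central combinatorial claim is that the $L$-leader of the growing square segment never deadlocks and terminates precisely when the row counter reaches $\sqrt{n}-1$ and the seed attaches. I would prove this via a simple potential/invariant argument. Define the invariant: at every step, every node is either (i) part of the square segment currently under the $L$-leader, (ii) part of the seed together with at most $\sqrt{n}$ nodes dangling below it in an incomplete replica, (iii) part of a completed or partially completed non-seed replica in the solution, or (iv) a free $q_{0}$. The crucial design choice, namely that $L$ accepts a non-seed replica for attachment even when that replica has nodes dangling underneath, and then immediately deactivates those vertical bonds to release the dangling nodes back into the free pool, prevents the only possible ``stuck'' configuration (a partial replica starved of free nodes while other partial replicas hold the remaining free nodes below them). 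Consequently, under any fair scheduling, the row counter strictly increases until it reaches $\sqrt{n}-1$, at which point all non-seed nodes have been absorbed into the square segment and only the seed (with possibly one dangling node below it, which will be released) remains; the leader then accepts the seed as the final row and enters its halting state.

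The main obstacle is precisely this deadlock-freedom argument: one must carefully track the populations of $q_{0}$, partial-replica, and absorbed nodes to argue that from any reachable configuration there always exists a permitted interaction that increases either the row counter or the total number of nodes absorbed into the square segment or the seed. Once this monotone progress is established, termination follows because the state space of (row counter, number of released non-seed nodes) is bounded by $O(n)$ and strictly increases along any fair execution. Composing the two phases, the overall construction succeeds and terminates with probability at least $1-1/n^{c}$, as required.
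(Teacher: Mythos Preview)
Your proposal is correct and follows essentially the same route as the paper's proof: invoke Lemma~\ref{lem:counting-line} for the w.h.p.\ guarantee, deterministically compute $\sqrt{n}$ and extend the line, spawn the seed via Protocol~\ref{prot:line-replication}, and then argue deadlock-freedom of the row-by-row assembly by exploiting the design choice that the $L$-leader accepts even a partially-replicating non-seed line and releases its dangling nodes. The paper frames the deadlock-freedom step as a direct fairness contradiction (assume $k\geq\sqrt{n}$ nodes remain free or in incomplete replications forever, then show a replica must eventually form), whereas you cast it as a monotone potential/invariant argument; these are two phrasings of the same idea and neither buys anything the other does not.
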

\begin{proof}
From Lemma \ref{lem:counting-line}, when the leader in Counting-on-a-Line protocol terminates, w.h.p. it has formed an active line of length $\log n$ containing $n$ in binary in the $r_0$ components of the nodes of the line. Then the leader computes $\sqrt{n}$ on its line and expands its line to make its length $\sqrt{n}$. Next the leader creates the seed replica by executing the routine described in Protocol \ref{prot:line-replication}. The seed replica keeps creating new self-replicating replicas. All these replications are performed by a routine essentially equivalent to Protocol \ref{prot:line-replication}. Every replica is a line of length $\sqrt{n}$ and will be eventually attached to the square-segment to form another row of the square. First observe that the seed may only be attached to the square, when the square has already obtained $\sqrt{n}-1$ rows. This implies that replications do not cease before the square has been successfully constructed. Additionally, any non-seed replica $r$ can be attached to the square-segment (whenever the $l$ leader is in the state of waiting for new attachments) independently of whether $r$ is in the middle of an incomplete replication. The reason is that attachment occurs via the up ports of $r$ while replication takes place via the down ports of $r$. If this occurs, then the nodes of the incomplete replication are simply released as free nodes. So, assume that there are $k$ nodes that are either free or part of an incomplete replication. We only have to prove that as long as $k\geq\sqrt{n}$ then eventually another replica has to be formed. If not, then for an infinite number of steps it holds that $k\geq\sqrt{n}$. Moreover, every non-seed replica in a finite number of steps becomes attached to the square-segment and releases any nodes of an incomplete replication. Thus, in a finite number of steps, every one of the $k\geq\sqrt{n}$ nodes is either free or part of an incomplete replication of the seed. Clearly, given that the seed does not cease self-replication and given that there are enough nodes to fill the $\sqrt{n}$ replication positions of the seed, in a finite number of steps (due to fairness) all these positions should have been filled and a replica should have been created. Thus, the assumption that no further replication occurs violates the fairness condition.
\qed
\end{proof}

\subsection{Simulating a TM}
\label{subsec:simulation}

We now assume as given (from the discussion of the previous section) a $\sqrt{n}\times\sqrt{n}$ square with a unique leader $L$ at the bottom left corner. However, keep in mind that, in principle, the simulation described here can begin before the construction of the $\sqrt{n}\times\sqrt{n}$ square is complete. The only difference in this case, is that the two processes are executed in parallel and if at some point the TM needs more space, it has to wait until it becomes available. The square may be viewed as a TM-tape of length $n$ traversed by the leader in a ``zig-zag'' fashion, first moving to the right until the bottom right corner is encountered, then one step up, then to the left until the node above the bottom left corner is encountered, then one step up again, then right, and so on. To simplify this process, we may assume that a preprocessing has marked appropriately the turning points (see Figure \ref{fig:shape2}). The tape will be used to simulate a TM $M$ of the form described in the Section \ref{sec:model}. The $n$ pixels of the square are numbered according to the above zig-zag process beginning from the bottom left node, each node corresponding to one pixel. The space available to the TM is exponential in the binary representation of the input $(i,n)$ (or $(i,\sqrt{n})$), because $i\leq n-1$ and therefore the length of its binary representation $|i|=O(\log n)$, thus $|(i,n)|=O(\log n)$, but the available space is $\Theta(n)=\Theta(2^{\log n})=\Omega(2^{|(i,n)|})$ (still it is linear in the size of the whole shape to be constructed).

The protocol invokes $n$ distinct simulations of $M$, one for each of the pixels $i\in\{0,1,\ldots,n-1\}$ beginning from $i=0$ and every time incrementing $i$ by one. The leader maintains the current value of $i$ in binary, in a pixel-counter $pixel$ stored in the $O(\log n)$ leftmost cells of the tape. \footnote{When we refer to the \emph{tape}, we mean the line produced by traversing the square in a zig-zag way beginning from the bottom-left node, as described above. So the ``leftmost'', here, corresponds to the leftmost nodes of the line, e.g. the left part of the bottom row of the square, and should not be confused with the nodes on the leftmost column of the square.} Recall that the leader knows $n$ from the procedures of the previous sections. So, we may assume that the tape also holds in advance $n$ and $\sqrt{n}$ in binary (again in the leftmost cells). Initially $pixel=0$ and the leader marks the 0th node, that is the bottom left corner of the square. Then it simulates $M$ on input $(pixel,\sqrt{n})$. When $M$ decides, if its decision is \emph{accept}, the leader marks the node corresponding to $pixel$ as \emph{on}, otherwise it marks it as \emph{off}. Then the leader increments $pixel$ by one, marks the node corresponding to the new value of $pixel$ (which is the next node on the tape), clears the tape from residues of the previous simulation, invokes another simulation of $M$ on the new value of $pixel$, and marks the corresponding node as \emph{on} or \emph{off} according to $M$'s decision. The process stops when $pixel=n$, in which case no further simulation is executed. Alternatively, the leader can detect termination by exploiting the fact that the last pixel to be examined is the one corresponding to the upper left or right corner of the square (depending on whether $\sqrt{n}$ is even or odd), which can be detected.

When the above procedure ends, the leader starts walking the tape in the opposite direction until it reaches the bottom left corner. In the way, it passes a \emph{release} signal to every node it goes through. A node enters the release phase exactly when the leader departs from that node, apart from the bottom left corner which enters the release phase when the leader arrives. When two nodes that are both in the release phase interact, if at least one of them is \emph{off} and their connection is active, they deactivate the connection. Clearly, the only nodes that will remain connected in the solution are the \emph{on} nodes forming the desired connected 2-dimensional shape that $M$ computes. If we additionally require the leader to know when all deactivations have completed and terminate, then we can either (i) have the leader deactivate them itself while moving backwards, also ensuring that it does not remain on a node that will be released, or (ii) have the leader repeatedly explore the final connected shape until it detects that all potential deactivations have occurred.

\begin{figure}[!hbtp]
   \centering{
        \subfigure[]{
        \includegraphics[width=0.35\textwidth]{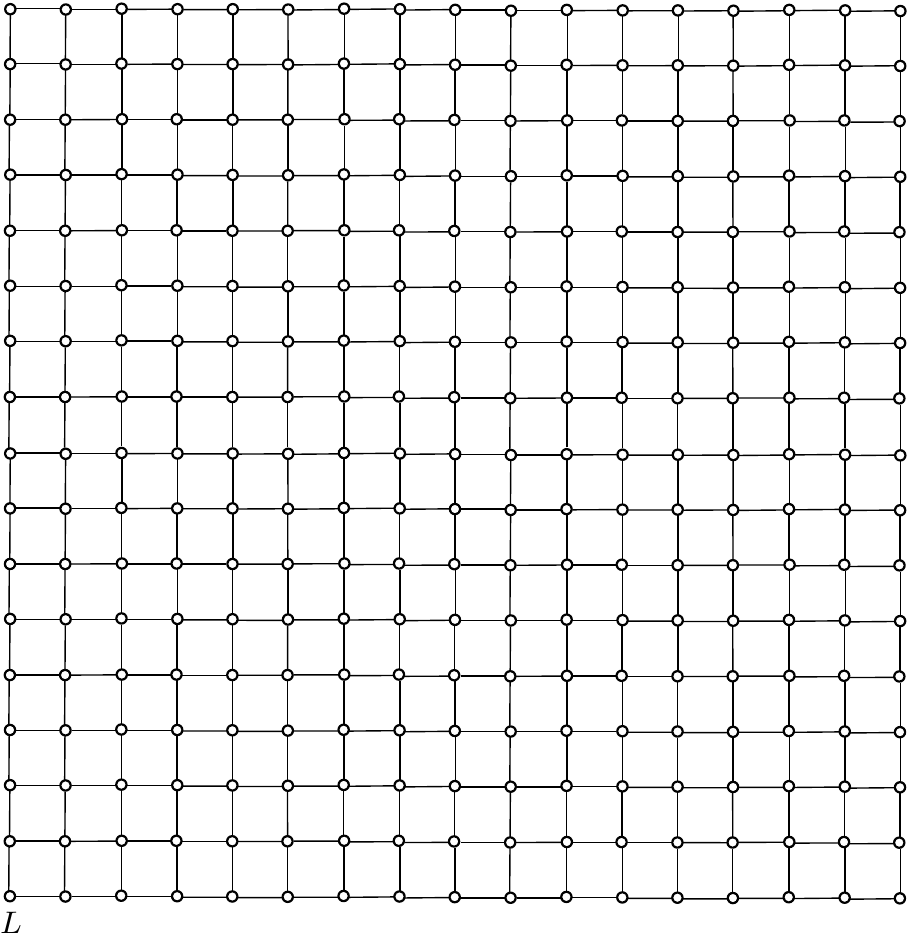}
        \label{fig:shape1}}
	\hspace{1cm}
        \subfigure[]{
        \includegraphics[width=0.35\textwidth]{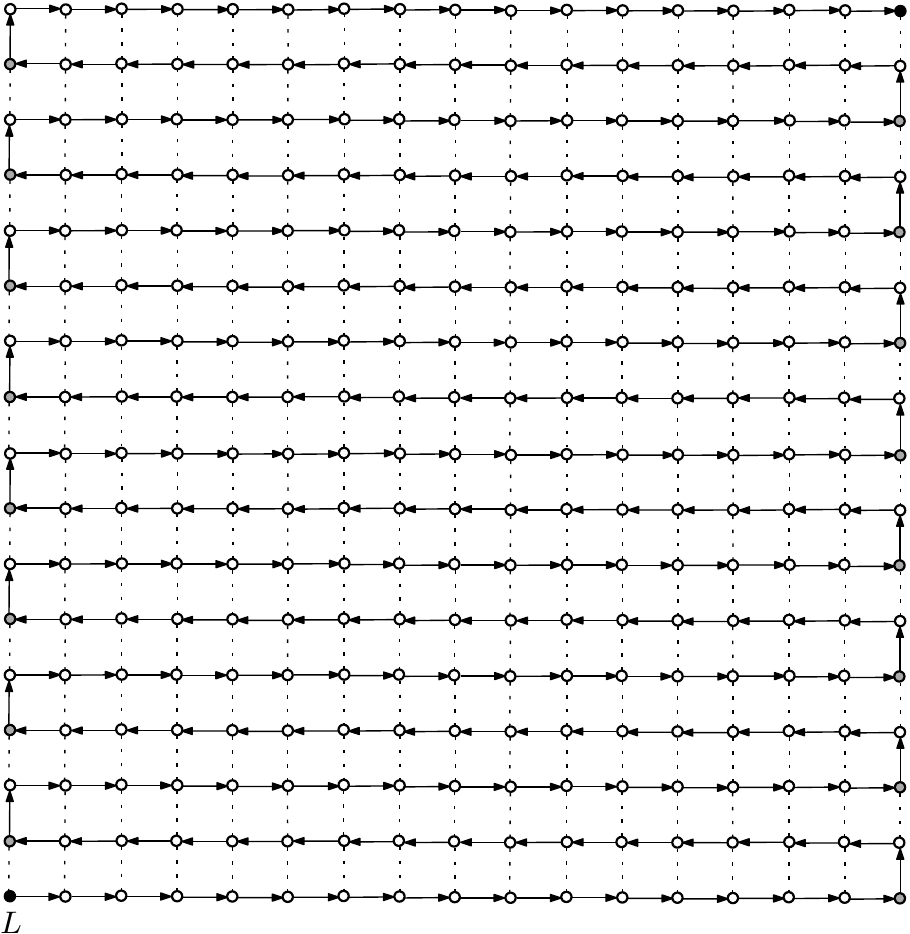}
        \label{fig:shape2}}
	\hspace{1cm}
        \subfigure[]{
        \includegraphics[width=0.35\textwidth]{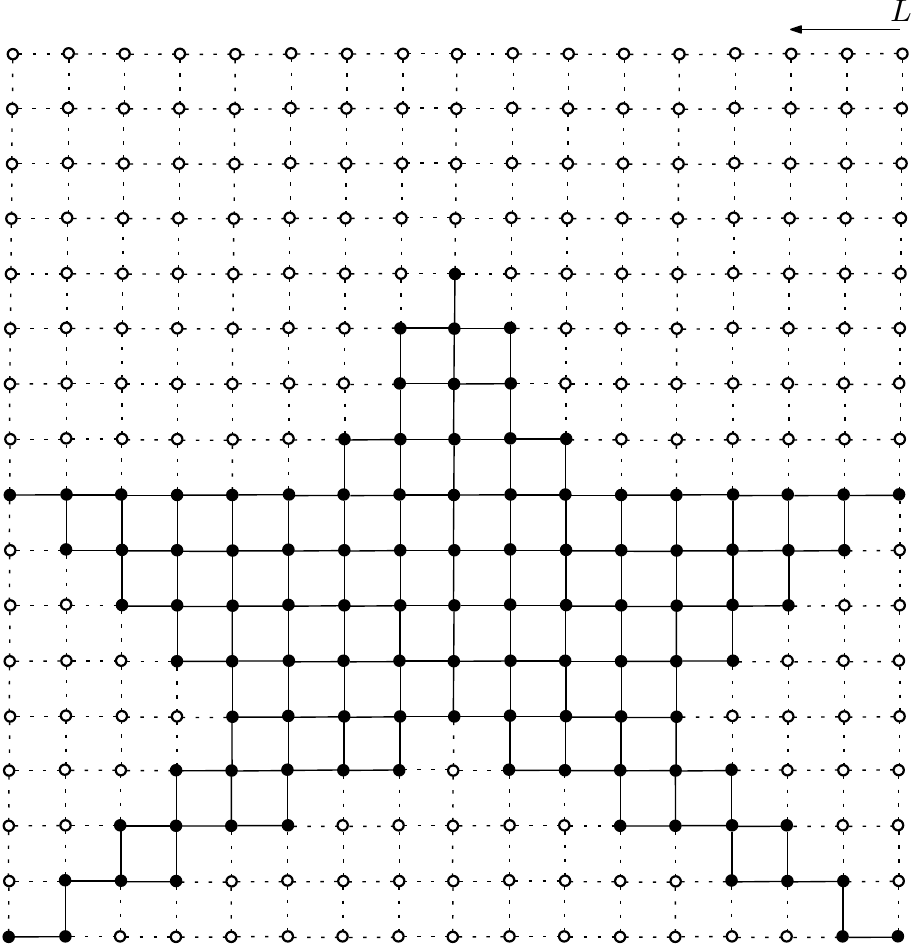}
        \label{fig:shape3}}
	\hspace{1cm}
        \subfigure[]{
        \includegraphics[width=0.35\textwidth]{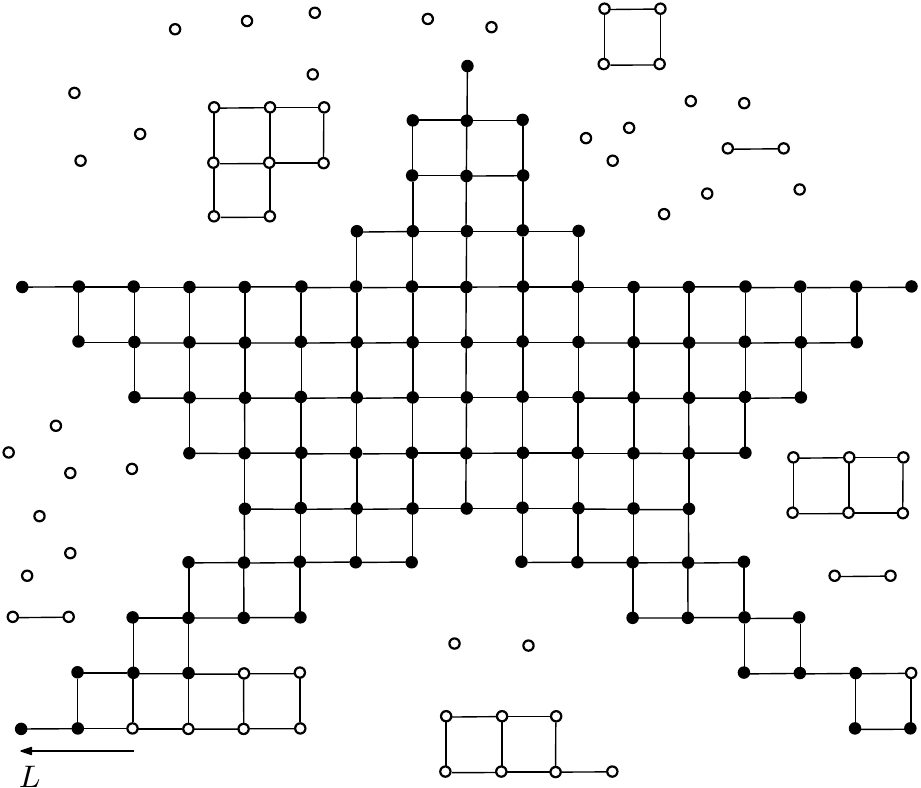}
        \label{fig:shape4}}        
        }
   \caption{(a) The $\sqrt{n}\times\sqrt{n}$ square has just been constructed. (b) The virtual tape on the square. The arrows show the direction in which the tape is traversed from left to right (opposite arrows for the opposite direction are not shown). The two endpoints of the tape are marked as black here and the turning points are marked as gray. These facilitate the leader to detect and choose the right action, e.g. turn left twice (equivalently, follow the up port and then the left port) when it arrives at the bottom right corner and wishes to continue on the second row. The indices of the pixels that the procedure assumes, follow the order of the tape, that is the first position of the tape corresponds to pixel 0, the second to pixel 1,..., the last position of the tape to pixel $n-1$. (c) The shape, which looks like a star, has been formed on the square. It consists of the pixels that the TM $M$ decided to be \emph{on}, which are colored black here. All other white pixels are the \emph{off} pixels. The simulations have completed and the leader has just reached the upper right corner and now it starts releasing the shape. To improve visibility, the edges that will eventually be deactivated appear as dotted here. (d) Releasing is almost complete. The leader has reached the bottom left corner and has updated all nodes to the release phase. Any connection involving at least one \emph{off} node (i.e. a white one) will be eventually deactivated.} \label{fig:shape}
\end{figure}

The following theorem states the lower bound implied by the construction described in this section.

\begin{theorem}
Let $\cl=(S_1,S_2,\ldots)$ be a connected 2D shape language, such that $\cl$ is TM-computable in space $d^2$. Then there is a protocol (described above) that w.h.p. constructs $\cl$. In particular, for all $d\geq 1$, whenever the protocol is executed on a population of size $n=d^2$, w.h.p. it constructs $S_d$ and terminates. In the worst case, when $G_d$ (that is, the shape of $S_d$) is a line of length $d$, the waste is $(d-1)d=O(d^2)=O(n)$.
\end{theorem}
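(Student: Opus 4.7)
The plan is to compose the three subroutines of Section \ref{sec:gen-con} and argue correctness through a union bound on the high-probability events guaranteed by each stage. First, I would invoke the Counting-on-a-Line protocol on the whole population of size $n = d^2$; by Lemma \ref{lem:counting-line}, the unique leader terminates with probability at least $1 - 1/n^{c}$ holding a line of length $\lfloor\lg n\rfloor + 1$ that stores $n$ in binary in the $r_0$ components. Conditioning on this event, I would then run the Square-Knowing-$n$ protocol from Section \ref{subsec:repl-square}; since $\sqrt{n} = d$ is an integer by assumption, Lemma \ref{lem:square} guarantees that the protocol terminates with a $d\times d$ square having a unique leader $L$ at the bottom-left corner (the internal arguments of Lemma \ref{lem:square} are deterministic conditional on correct counting, so no additional probability is lost beyond the counting step).

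Next I would argue that the TM-simulation stage described in Section \ref{subsec:simulation} deterministically realizes $S_d$ on this square. The leader uses the leftmost $O(\log n)$ cells of the zig-zag tape to store $pixel$, $n$, and $\sqrt{n}=d$ in binary, and iterates $pixel = 0, 1, \ldots, n-1$. For each pixel it simulates $M$ on input $(pixel, d)$; the available tape space is $\Theta(n) = \Theta(d^2)$, which matches the $d^2$ space bound in the hypothesis that $\cl$ is TM-computable in space $d^2$, so every simulation completes and marks the corresponding node \emph{on} or \emph{off}. Termination of the outer loop is detected either by checking $pixel = n$ on the counter or by recognizing the final corner pixel (depending on the parity of $d$). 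Correctness here is deterministic once the square is in place.

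Finally I would analyze the release phase. After the last simulation the leader traverses the tape backward, switching every node it leaves to the release phase, and the bottom-left corner enters the release phase on arrival. By the semantics of the release rule, any active edge incident to at least one \emph{off} node is deactivated, while edges between two \emph{on} nodes are preserved; since the shape $G_d$ defined by $S_d$ is connected by hypothesis, what remains is exactly $G_d$. The leader can detect completion either by deactivating edges itself during the backward pass (ensuring it never lands on a node that will be isolated) or by repeatedly re-exploring the connected \emph{on} component until no deactivations occur. A union bound over the two probabilistic stages (counting and replication to form the square) yields overall correctness with probability at least $1 - 2/n^{c}$, which is still of the form $1 - 1/n^{c'}$ for a suitable constant.

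For the waste bound, if $G_d$ is the horizontal line of length $d$, then $|V(G_d)| = d$ while the protocol uses all $n = d^2$ nodes (the remaining $d^2 - d$ nodes become free isolated nodes after release), giving waste $d^2 - d = (d-1)d = O(n)$. The main obstacle I anticipate is the bookkeeping for clean \emph{sequential composition}: I must make sure that when each subroutine terminates w.h.p., the following subroutine can be triggered from a recognizable local configuration (the leader's halt state with its line intact), and that free nodes produced during the square construction or released after the TM decides \emph{off} do not interfere with ongoing simulation. This is exactly the payoff of having terminating, rather than merely stabilizing, subroutines guaranteed by Sections \ref{sec:counting} and \ref{subsec:repl-square}, so the composition goes through and the theorem follows.
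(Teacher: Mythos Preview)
Your proposal is correct and follows essentially the same route as the paper: invoke Lemma~\ref{lem:square} (which already subsumes Lemma~\ref{lem:counting-line}) to obtain the $d\times d$ square w.h.p., then deterministically simulate $M$ on each pixel over the zig-zag tape, then release. The paper in fact notes that the only probabilistic ingredient is the initial counting, so the failure probability of the whole protocol equals that of the counting subroutine; your two-stage union bound is harmless but redundant given what you yourself observed about the square construction being deterministic once the count is correct.

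One small omission concerns the waste claim. You compute the waste $(d-1)d$ for the specific case where $G_d$ is a line, but the theorem asserts this is the \emph{worst} case, which requires an argument. The paper supplies it: if the waste were at least $(d-1)d+1=d^2-d+1$, then $|V(G_d)|\le d-1$, and since $G_d$ is connected this forces $max\_dim_{G_d}\le d-1$, contradicting the standing assumption $max\_dim_{G_d}=d$. You should add this one-line bound to complete the waste analysis.
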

\begin{proof}
We have to show that for every $n=d^2$, when the protocol is executed on $d^2$ nodes constructs $G_d$. From Lemma \ref{lem:square}, we have a subroutine that terminates having w.h.p. constructed a $d\times d$ square with a unique leader on the bottom left node. Next, the leader can easily organize the square into a tape of length $d^2$ that has $d$ stored in binary in its leftmost cells. Moreover, $\cl$ is computable, so, by Definition \ref{def:computable}, there is a TM $M$ that when executed on the pixels of a $d\times d$ square constructs $S_d$. The protocol simulates $M$ on the pixels of such a $d\times d$ square thus the result is $S_d$, which is an \emph{on}/\emph{off} labeled $d\times d$ square the \emph{on} pixels of which form $G_d$. To perform the simulation, the protocol just feeds $M$ with $(i,d)=(0,d),(1,d),\ldots, (d^2-1,d)$, one at a time, simulates $M$ on input $(i,d)$ in space $\Theta(d^2)$, marks the corresponding pixel as \emph{on} or \emph{off} according to $M$'s decision, and moves on to the next input. When $i=d^2$, the square contains $G_d$ and the leader releases $G_d$ by one of the terminating approaches described above and terminates. Observe that, given the guarantees of Lemma \ref{lem:square}, the procedure described here is always correct. So, the probability of failure of the whole protocol is just the probability of failure of the initial counting subroutine, thus the protocol succeeds w.h.p.. Finally, the waste is always equal to the number of pixels of the $d\times d$ square that are not part of $G_d$. Observe now that the waste can never be more than $(d-1)d$, because if it was at least $(d-1)d+1=d^2-d+1$, then the size of $G_d$ (i.e. the useful space) would be at most $d^2-(d^2-d+1)=d-1$. But then, connectivity of $G_d$ implies that $max\_dim_{G_d}\leq d-1$, which contradicts the assumption that $max\_dim_{G_d}=d$. Thus, the worst possible waste is indeed $(d-1)d=O(d^2)=O(n)$. Notice that here the waste of the protocol is equal to the waste of the simulated TM: the protocol just provides the maximum square that fits in the population and the TM determines which nodes will be part of the final shape and which will be thrown away as waste.
\qed
\end{proof}

\begin{remark}
It is worth mentioning that if the system designer knew $n$ in advance, then he/she could preprogram the nodes to simulate a TM that constructs a specific shape of size $n$, for example the TM corresponding to the Kolmogorov complexity of the shape (which is in turn the Kolmogorov complexity of the desired binary pixel sequence $(s_0,s_1,\ldots,s_{n-1})$). However, in this work we consider systems in which $n$ \emph{is not known in advance}, so the natural approach is to preprogram the nodes with a TM that can work for all $n$. The protocol must first compute $n$ (w.h.p.) and then simulate the TM on input $n$ to construct a shape of the appropriate size. For example, it could be a TM constructing a star, as in Figure \ref{fig:shape3}, such that the size of the star grows as $n$ grows.   
\end{remark}

\begin{remark}
The above results can be immediately modified to refer to \emph{patterns} instead of shapes. In fact, observe that the $\sqrt{n}\times\sqrt{n}$ square that has been labeled by \emph{off} and \emph{on} by the TM is already such a (computed) 0/1 pattern. The generic idea to extend this is to keep the same constructor as above and simulate TMs that for every pixel output a color from a set of colors $\cc$. Then the resulting square with its nodes labeled from $\cc$ is the desired computed pattern and no releasing is required in this case.
\end{remark}

\subsection{Parallelizing the Simulations}
\label{subsec:parsim}

We now present two approaches for parallelizing the simulations, instead of executing them sequentially one after the other.

\subsubsection{Approach 1}

The first approach uses the 3D model (that is, the one with 6 ports) to construct a 2D shape as before. Assume that the shape $G$ has $\max\_dim_G=d$ ($d$ could be a function of $n$; e.g. it was $O(\sqrt{n})$ in the previous section). We again construct a $d\times d$ square as before. Assume also that there is a TM $M$ deciding the status of each of the pixels in space $k$ (again could be a function of $n$, e.g. $\log n$ or $\sqrt[3]{n}$). We assume that both $k$ and $d$ are computable in space $O(n)$, so that in the worst case we can create a spanning line, compute them on it by simulating a TM, and create seeds of the appropriate lengths as before. Assume also that we have a population of size $k\cdot d^2$.

When counting of $n$ terminates, the leader computes two different seeds, one of length $d$ and another of length $k-1$ (this is in contrast to the unique seed of Section \ref{subsec:repl-square}). Then it first activates the seed of length $d$ and keeps the other seed in a sleeping state. As before, it constructs a $d\times d$ square using, say, only dimensions $x$ and $y$. When this construction completes, the leader wakes up the seed of length $k-1$ to organize all the remaining nodes into lines of length $k-1$ in the $z$ dimension. Each of these lines will be attached ``below'' \footnote{It is not actually below; it is in the positive part of the $z$ dimension, but we use ``below'' to be consistent with the more intuitive illustration of Figure \ref{fig:3d-par-simulations}.} one of the pixels of the square. The pixel and its line form together the required TM-tape of length $k$ corresponding to that pixel (see Figure \ref{fig:3d-par-simulations}). So, when this process ends, we have a memory of length $k$ attached to each pixel of the square. For a simple example, when $d=k=\sqrt[3]{n}$, then the construction so far will ``look like'' a $\sqrt[3]{n}\times \sqrt[3]{n}\times \sqrt[3]{n}$ cube (actually, if for physical reasons we want to have a more rigid structure, we can also activate all edges between the lines; in this case the construction of the previous example \emph{would be} a $\sqrt[3]{n}\times \sqrt[3]{n}\times \sqrt[3]{n}$ cube). Then the leader initializes each memory to $(i,d)$, where $i$ is the index of its pixel, the indices being counted by their distance from the ``bottom left'' corner of the square as before, and then informs each pixel-head to start simulating $M$ on its tape. In this way, we have $d^2$ simulations being executed in parallel each on its own tape of length $k$. When all simulations have completed, the leader may first release all memories to keep only the $d\times d$ square and then apply the releasing process described in the previous section to release the \emph{off} pixels of the square and isolate the connected shape consisting of the pixels that are \emph{on}.

\begin{figure}[!hbtp]
\centering{
\includegraphics[width=0.55\textwidth]{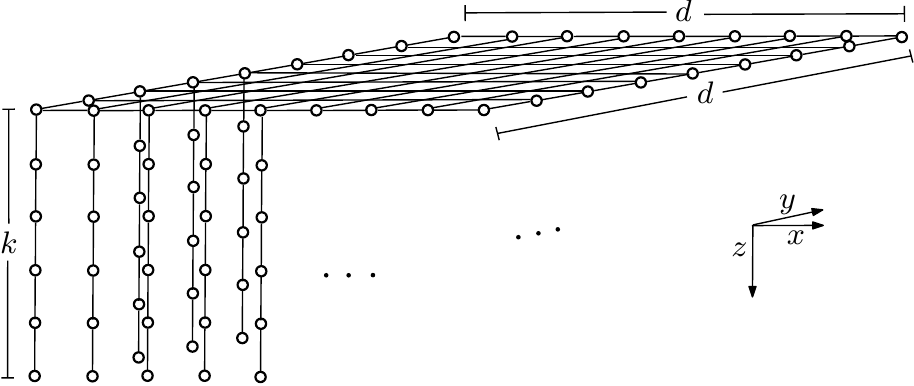}
}
\caption{The constructed $d\times d$ square lies in dimensions $x$ and $y$. We can think as its ``bottom left'' corner, its leftmost node in the figure. Every internal intersection point of the square is also a node, but we have not drawn these nodes here to improve visibility. ``Below'' it, in dimension $z$, are the $d^2$ lines of length $k$ each. The protocol executes a distinct simulation of the TM on each of these lines. In particular, on the line attached to pixel $i$, for all $0\leq i\leq d^2-1$, the protocol simulates the TM on input $(i,d)$.} \label{fig:3d-par-simulations}
\end{figure}

\begin{theorem}
Let $\cl=(S_1,S_2,\ldots)$ be a TM-computable connected 2D shape language, such that $S_d$ is computable in space $k=f(d)$ and $k$ is computable in space $O(k\cdot d^2)$. Then there is a protocol (described above) that w.h.p. constructs $\cl$. In particular, for all $d\geq 1$, whenever the protocol is executed on a population of size $n=k\cdot d^2$, w.h.p. it constructs $S_d$ and terminates, by executing $d^2$ simulations in parallel each with space $O(k)$. In the worst case, when $G_d$ is a line of length $d$, the waste is $(d-1)d+(k-1)d^2=O(k\cdot d^2)$.
\end{theorem}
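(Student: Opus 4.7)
The plan is to layer a parallel simulation phase on top of the sequential constructor of Section \ref{subsec:simulation}, using the extra $z$-dimension to give each pixel its own private TM-tape. First I would invoke the Counting-on-a-Line protocol to obtain, w.h.p., the binary representation of $n = k \cdot d^2$ on a line held by a unique leader $L$. Since by hypothesis $k$ (and hence $d = \sqrt{n/k}$) is computable in space $O(k \cdot d^2) = O(n)$, and the leader can freely expand its tape up to length $n$ before any shape-assembly begins, both parameters can be extracted deterministically from $n$ by a TM simulation on the line.

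With $d$ and $k$ in hand, $L$ creates two seed lines as in Section \ref{subsec:repl-square}: one of length $d$ (active) and one of length $k-1$ (dormant). The $d$-seed drives the assembly of the $d\times d$ square in the $xy$-plane exactly as in Lemma \ref{lem:square}, but now parametrized to halt as soon as $d^2$ nodes have been consumed --- this is achievable since the leader knows $d$ and can count rows. The leader then activates the $(k-1)$-seed, which plays the identical replicating role but targets attachment below free pixels of the square via the $p_z$ port. Each pixel carries a local flag indicating whether it already has a memory attached; only \emph{free} pixels accept an incoming replica. By the same fairness argument as in Lemma \ref{lem:square}, so long as more than $k-1$ unassigned nodes remain and at least one pixel has no memory, replication cannot cease, so every pixel eventually carries a length-$(k-1)$ column; the pixel together with its column forms the required length-$k$ tape.

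Once all $d^2$ memories are in place (detectable by $L$ scanning the square once), it initializes each pixel-head with the pair $(i,d)$ in binary, where $i$ is the zig-zag index of the pixel; this index can be imprinted during a single traversal that also marks the turning-point states of Section \ref{subsec:simulation}. Each pixel-head then runs an independent simulation of $M$ on its own column. Because the $d^2$ simulations work on disjoint sets of nodes and communicate with no one, they cannot interfere, and each outputs $S_d[i]$. The main obstacle is \emph{global termination detection}: unlike the sequential case, $L$ must learn when \emph{every} simulation has halted. I would resolve this by a wave-propagation scheme on the square --- each pixel raises a local \texttt{done} bit when its simulation halts, and a pixel becomes \texttt{done-propagated} only after its $+x$ and $+y$ neighbors are \texttt{done-propagated} (with appropriate base cases at the upper and right boundaries). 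Fairness guarantees that the wave reaches $L$ at the bottom-left corner in finite time after the last simulation finishes, and only then.

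At that point, $L$ initiates two successive release phases: first a signal runs down each $z$-column, deactivating every vertical connection and detaching all $(k-1)d^2$ memory nodes as free; then the release procedure of Section \ref{subsec:simulation} is applied to the square, deactivating every edge incident to an \emph{off} pixel. What remains is the connected shape $G_d$, and termination uses either of the two variants described there. Correctness is a clean composition: the only randomized step is the initial counting, which succeeds with probability at least $1 - 1/n^c$; conditioned on that event, every subsequent phase is deterministic in its output (fairness only affects how long each phase takes, not what it produces), so the overall protocol succeeds w.h.p. For the waste, the population is $k \cdot d^2$ and only the nodes of $G_d$ survive; in the worst case $G_d$ is a line of length $d$, so the $(d-1)d$ \emph{off} pixels of the square and all $(k-1)d^2$ memory nodes are released, giving total waste $(d-1)d + (k-1)d^2 = O(k\cdot d^2) = O(n)$.
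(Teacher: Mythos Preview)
Your proposal is correct and follows essentially the same construction the paper describes in the prose preceding the theorem (the paper states the theorem without a separate formal proof): count $n$ w.h.p., compute $d$ and $k$ on the leader's line, build the $d\times d$ square via a $d$-seed, then attach $(k-1)$-columns in the $z$ dimension via a second seed, initialize each pixel with $(i,d)$, run the $d^2$ simulations in parallel on disjoint tapes, release the memories, and finally release the \emph{off} pixels. Your explicit wave-propagation scheme for global termination detection is a reasonable elaboration of a step the paper leaves implicit (``when all simulations have completed''); otherwise the decomposition, the appeal to Lemma~\ref{lem:square} for the replication/attachment argument, and the waste accounting all match the paper's approach.
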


\subsubsection{Approach 2}

We now show how to achieve a similar parallelism while avoiding the use of a third dimension. Now the unique leader that knows $n$, instead of constructing a square, constructs a spanning line of length $d^2$, say in the $x$ dimension. This line corresponds to a linear expansion of the pixels of the $d\times d$ square of the previous construction. Moreover, the leader creates a seed of length $k-1$ as before, to partition the rest of the nodes into lines of length $k-1$, this time in the $y$ dimension. Each such line will be attached below one of the nodes of the $x$-line. As before, when all $y$-lines have been attached, the leader initializes their memories with $(i,d)$, where $i$ is the index of the corresponding pixel (the index of each pixel is now its distance from the left endpoint of the $x$-line, beginning from 0 and ending at $d^2-1$). Then all simulations of $M$ are executed in parallel and eventually each one of them sets its $x$-pixel to either \emph{on} or \emph{off}. When all simulations have ended, the leader releases the auxiliary memories (i.e. the $y$-lines) and then partitions the $x$-line into consecutive segments of length $d$ by placing appropriate marks on the boundary nodes (see Figure \ref{fig:linear-par-sim}). Each segment corresponds to a row of the $d\times d$ square to be constructed. In particular, segment $i\geq 1$ counting from left corresponds to row $i$ (rows being counted bottom-up). Observe that, in the way the pixels have been indexed, segment $2$ should match with its upper side the upper side of segment $1$ (that is segment 2 should rotate $180\degree$), segment $3$ should match with its lower side the lower side of segment $2$, and so on. In general, if $i$ is even, segment $i$ should match with its upper side to the upper side of segment $i-1$ and, if $i$ is odd, segment $i$ should match with its lower side the lower side of segment $i-1$. The leader marks appropriately the nodes of each segment to make them aware of the orientation that they should have in the square. Moreover, it assigns a unique key-marking to each segment so that segment $i$ can easily and locally detect segment $i-1$. In particular, if $i$ is odd, it marks nodes $i$ and $i-1$ of the segment counting from left to right (for segment 1 it only marks the leftmost node), and, if $i$ is even, it marks nodes $i$ and $i-1$ of the segment counting from right to left. In this manner, given that segments respect the correct orientation and provided that attachment is only performed when their endpoints match, every segment $i$ uniquely matches to segment $i-1$ because the first mark of $i$ is uniquely aligned with the second mark of $i-1$ (see Figure \ref{fig:par-sim-assemblying}). Then the leader releases all segments, one after the other, and it remains on the last segment. The segments are free to move in the solution until they meet their counterpart, and when this occurs the two segments bind together. Eventually, the $d\times d$ square is constructed and every pixel is in the correct position (the position corresponding to its index counting in a zig zag fashion as in the previous sections). The leader periodically walks on its component to detect when it has become equal to the desired square. When this occurs, it initiates as before the releasing phase to isolate the final connected shape consisting of the \emph{on} pixels.   

\begin{figure}[!hbtp]
   \centering{
        \subfigure[]{
        \includegraphics[width=0.73\textwidth]{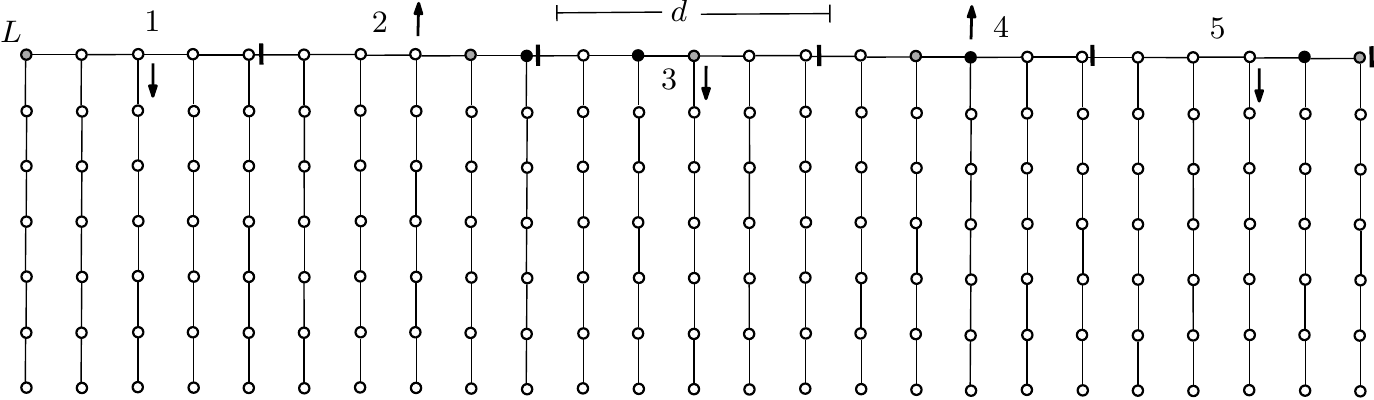}
        \label{fig:linear-par-sim}}
	\hspace{1cm}
        \subfigure[]{
        \includegraphics[width=0.17\textwidth]{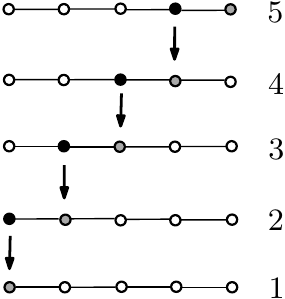}
        \label{fig:par-sim-assemblying}}  
        }
   \caption{(a) As in Figure \ref{fig:3d-par-simulations}, $d^2$ lines of length $k-1$ each, are pendent below the $d^2$ pixels. The difference now is that the pixels have been arranged linearly in dimension $x$. So, the whole construction is now 2-dimensional. The pixels have been partitioned into equal segments of length $d$ each (see the black vertical delimiters). The numbers represent the indices of the segments counted from left to right. The arrows leaving above or below the segments, indicate which side of the segment should look ``downwards'' in the square that will be constructed. For example, segment 1 can remain as it is, while segment 2 has to be rotated so that its upper side attaches to the upper side of segment 1. Every segment has been marked by a black and a gray node placed at an appropriate position. (b) The segments have been released in the solution, and now they have to gather together in order to form the square. Each segment knows the correct orientation, i.e. whether it should use its up or down ports, and also it can detect its predecessor row by exploiting the marking. In particular, it attaches to a row if its black mark is above the gray mark of the other row when their orientation is correct and their endpoints are totally aligned.} \label{fig:linear-par}
\end{figure}

\begin{remark}
In all the above constructions the unique leader assumption can be dropped in the price of sacrificing termination. In this case, the constructions become stabilizing by the reinitialization technique, as e.g. in \cite{MS14}, but should be carefully rewritten.
\end{remark}

\section{Replicating Arbitrary 2D Shapes}
\label{sec:replication}

In this section we consider the problem of replicating a given 2-dimensional shape $G$ without using a third dimension. $G$ is a connected shape with its nodes labeled \emph{on} and has a unique leader on one of its nodes. The leader does not know $G$ nor its size. All remaining nodes are \emph{off} and free in the solution and we assume that is in any case a sufficient number of them (their actual number depends on $G$ and the replication approach).

\subsection{Approach 1}

First a \emph{squaring} phase is executed during which the shape $G$ is being surrounded with additional nodes in order to become included in the smallest rectangle $R_G$ containing it (observe that ``squaring'' is actually a misnomer). As we discuss below, the leader need not control the squaring phase as it can be performed in parallel by all nodes of the square by local tests and expansions. However, the leader is the one that will detect the successful completion of the squaring phase. To achieve this, it only has to move periodically around the connected shape until it detects that it has become a rectangle. When this occurs, the squaring phase ends and the leader initiates the \emph{shifting} phase. W.l.o.g. let shifting be performed in terms of columns being shifted in the $x$ dimension and let the leader begin from the leftmost column. The leader first copies the configuration of the leftmost column (i.e. the state, \emph{on} or \emph{off}, of each node of the column) to separate components in the states of the nodes of the second column. The original status of each node, that is whether it was initially \emph{on} or \emph{off}, is always maintained in a component of its state. The additional components store the replica that is being shifted to the right. In general, the leader copies, one after the other, the configuration of column $i$ to column $i+1$, for every column $i$ of the rectangle $R$ apart from the rightmost one. When it reaches the rightmost column, which is the last column to be copied, as there is no further column to the right, the leader attaches first free nodes to create a new column and then performs copying as before. This completes the first shifting round. In all subsequent shifting rounds, the leader performs shifting always beginning from the rightmost column of the replica. Again it first introduces a new column to the right in order to shift the rightmost column of the replica and then starts, one after the other, to shift the remaining columns to the right. When it completes another shifting round, it moves again to the rightmost column of the replica in order to perform the next shifting round in precisely the same way. When a shifting rounds ends at the rightmost column of the original rectangle, then the leader knows that the whole shape has been totally shifted to an identical rectangle to the right and it stops the shifting phase. Then it releases the two rectangles by deactivating the connections between the rightmost column of the original and the leftmost column of the replica. After this, one leader remains on the original rectangle and the other on the replica. Both execute a de-squaring phase to release the dummy nodes that were used to form the rectangles and isolate the two identical shapes.\\

\noindent\textbf{Squaring.} We give some more details on the squaring process. Recall that we have a given, called the \emph{original}, connected shape $G$ that we want to enclose in the minimal rectangle $R_G$ containing it. We claim that squaring can be performed by local detections and actions without the need of a unique leader in the shape. 

\begin{figure}[!hbtp]
   \centering{
        \subfigure[]{
        \includegraphics[width=0.2\textwidth]{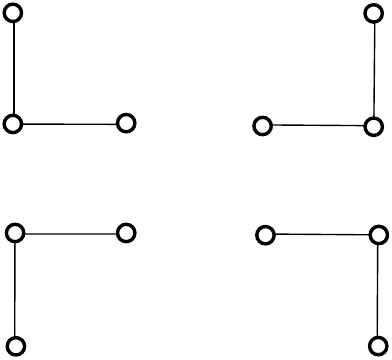}
        \label{fig:det-patterns1}}
	\hspace{1cm}
        \subfigure[]{
        \includegraphics[width=0.4\textwidth]{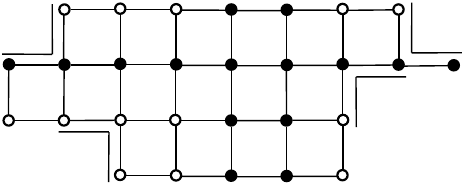}
        \label{fig:det-patterns2}}  
        }
   \caption{(a) If the connected shape is not yet a rectangle then at least one of these locally detectable shapes must exist. (b) An example of an incomplete rectangle, in which all four detection shapes appear. Black nodes and the connections between them constitute the original shape $G$. White nodes and the remaining connections have been introduced by the so far execution of the squaring process.} \label{fig:det-patterns}
\end{figure}

\begin{proposition}
At least one of the shapes of Figure \ref{fig:det-patterns1} exists in a connected shape $G$ iff $G$ is not a rectangle. Such shapes can be used to locally detect that $G$ is not a rectangle yet.
\end{proposition}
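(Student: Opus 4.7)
The plan is to prove the two directions separately; the ``only if'' direction is immediate, while the ``if'' direction reduces to a classical corner/boundary observation on polyominoes. For the ``only if'' direction, I would note that if $G=R_G$ is a full $a\times b$ grid rectangle, every $2\times 2$ sub-block lying in $R_G$ has all four of its pixels in $G$. Each of the four shapes in Figure \ref{fig:det-patterns1} is (up to rotation) three pixels of a $2\times 2$ block with the fourth pixel absent from $G$, so no such pattern can appear in a rectangle.

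For the ``if'' direction, I would suppose $G$ is connected but not a rectangle, let $R_G$ be its minimum enclosing rectangle, and set $M\coleq V(R_G)\setminus V(G)$, which is nonempty by hypothesis. The goal is to exhibit a $2\times 2$ sub-block of $R_G$ containing exactly one pixel of $M$ and three pixels of $G$ --- this is precisely one instance of the four detection shapes. My approach is to study the boundary of $G$: take the unit edges of the dual grid that separate a $G$-pixel from a non-$G$-pixel and observe that they form closed polygonal cycles --- one outer cycle surrounding $G$, plus one inner cycle around each hole of $G$ that lies inside $R_G$. Traversing each cycle counterclockwise and summing the $\pm 90\degree$ turns, the total is $+360\degree$ on the outer cycle and $-360\degree$ on each inner cycle, so the outer cycle has exactly four more convex than concave turns, while every inner cycle has four more concave than convex turns from $G$'s side. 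A concave turn at a grid vertex $v$ is precisely a $2\times 2$ block in which three pixels are in $G$ and one is in $M$, i.e., a detection pattern.

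It then suffices to verify that at least one concave turn exists. If $G$ has any hole inside $R_G$, the corresponding inner cycle already contributes at least four concave turns. If instead $G$ is simply connected, its boundary is a single outer cycle; if that cycle had zero concave turns it would consist of exactly four convex corners joined by straight axis-aligned segments, forcing $G$ to be a rectangle and contradicting the hypothesis. The step I expect to be most delicate is making precise the dictionary ``concave boundary turn at $v\Leftrightarrow$ $2\times 2$ block around $v$ has three $G$-pixels and one $M$-pixel,'' but this is routine local bookkeeping once the cycle decomposition is in place. A more elementary backup, avoiding turning-angle reasoning entirely, is to pick $m^*\in M$ minimizing $(y,x)$ lexicographically and produce an L by a short case analysis on $m^*$'s neighborhood, exploiting the minimality of $R_G$ and the connectivity of $G$; I would fall back on this if the boundary argument proved notationally cumbersome.
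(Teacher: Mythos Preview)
Your argument is correct for the missing-node case but takes a genuinely different route from the paper. The paper first dispatches the case in which only an edge of $R_G$ is absent between two nodes already in $G$ (the two endpoints detect and activate it locally), then assumes all such edges are present and gives a one-line combinatorial observation: if some node is missing, two rows (or two columns) of $G$ have unequal lengths, hence two \emph{consecutive} rows do; starting from any active vertical edge between them and walking horizontally one reaches the first position where the rows differ, and that position exhibits exactly the L-shaped triple $(u,u_r,v)$ with the fourth corner vacant. No boundary cycles, no turning-angle accounting, no hole/simply-connected dichotomy.

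Your turning-angle argument is more structural and buys a bit more (e.g., it shows there are at least four L-patterns around every hole), but it imports machinery the statement does not require, and the ``delicate'' dictionary step you anticipate is precisely where the paper's argument has zero overhead. One genuine gap to close: you take $M=V(R_G)\setminus V(G)\neq\emptyset$ as equivalent to ``$G$ is not a rectangle'', but in this model a shape is a subgraph carrying both nodes and active edges, so $G$ may contain every node of $R_G$ yet still fail to be the rectangle because some grid edge is inactive; your proof says nothing about this case, whereas the paper handles it explicitly as a preliminary before reducing to the missing-node situation. Your fallback plan (lexicographically minimal $m^*\in M$ plus a short neighborhood case analysis) is in fact much closer in spirit to the paper's own argument and would have been the shorter route.
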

\begin{proof}
Clearly, as long as $G$ is still a proper subgraph of $R_G$, at least one edge or node (or both) is present in $R_G$ but not in $G$. Observe that if an edge is missing but its endpoint nodes are both present and connected by active edges to the rest of the shape, then it is trivial to detect the absence of the edge locally: each of the two endpoints knows that it is connected to the shape, thus the two nodes just have to activate the edge joining them when they interact over it. So, we can w.l.o.g. assume that all possible edges are present between the nodes of $G$ and focus on the case that some node is missing. In fact, if some node is missing, then it must hold that at least two rows or at least two columns of $G$ have unequal lengths. Let us only consider the row case, as the other is symmetric. If there exist two rows in $G$ with unequal lengths then there must necessarily exist two \emph{consecutive} rows of $G$ with unequal lengths (otherwise, if all consecutive rows preserved the length then all rows should have the same length). As $G$ is connected, there must be at least one active vertical edge joining the two rows. Then we can begin from that edge and by walking either to the left or to the right, we must meet the first position at which the two rows differ. For example, if it was to the right, then node $u$ of one row has a right neighbor $u_r$, while the node $v$ above it (or below it, respectively) does not. Then it is trivial for the triple $(u, u_r, v)$ to locally detect that $v_r$ is missing and be ready to attach a free node to the required position when such a node arrives. Inversely, if $G$ is a rectangle then none of the above locally detectable shapes can exist.
\qed
\end{proof}

So, we can have all outer nodes of the segment of $R_G$ that has been constructed so far, to locally handle the squaring process without the help of the leader. The leader is only required up to this point to detect termination of the squaring phase. In particular, it suffices for the leader to detect that its component has become a rectangle, because the above process guarantees that when this occurs the constructed rectangle must be equal to $R_G$. TO achieve this the leader can, for example, begin from the leftmost column that it knows and attempt to traverse a rectangle in a zig-zag way, e.g. up the first column, then right, then down the second column, and so on. If it manages to complete such a traversal without encountering any recesses or overhangs. Finally, it is worth mentioning that, in order to execute correctly, the above replication protocol requires a population of size at least $2|V(R_G)|$ and its waste is $2(|V(R_G)|-|V(G)|)$.

\subsection{Approach 2}

Again the original shape is first squared. Then each column is assigned a unique matching identifier as in the previous section, so that column $i$ matches uniquely to column $i-1$. Then the rightmost column is replicated, by attaching free nodes to its right. Replication includes also the unique matching key. When replication completes, first the replica-column is released and then the original column, so that both move freely in the solution. If desired, we can have replica-columns (or just their keys) to use different states than original columns so that no two columns of different kinds ever become connected. The process continues with the other columns, each time replicating the rightmost one and then releasing both the replica and the original column. Eventually, all columns will be replicated and released. Moreover, as original (replica) column $i$ uniquely matches to original (replica) column $i-1$, eventually both the original and the replica rectangles are correctly assembled. Finally, a de-squaring phase is executed as before on both rectangles, to release the dummy nodes that were used for forming the rectangles and isolate the two identical shapes.

\section{Conclusions and Further Research}
\label{sec:conclusions}

There are several interesting open problems related to the findings of this work. A possible refinement of the model could be a distinction between the \emph{speed of the scheduler} and the \emph{internal operation speed of a component}. For example, a connected component will operate in synchronous rounds, where in each round a node observes its neighborhood and its own state and updates its state based on what it sees. Nodes can of course update also the state of their local connections and we may assume that a connection is formed/dropped if both nodes agree on the decision (another possibility is to allow a link change state if at least one of the nodes say so). This distinction between two different ``times'', though ignored so far in the literature, is very natural because a connected component should operate at a different speed than it takes for the scheduler to bring two nodes (e.g. of different components, or an isolated node and a node of some component) into contact. 

It would be also interesting to consider for the first time a \emph{hybrid model} combining active mobility controlled by the protocol and passive mobility controlled by the environment. For example it could be a combination of the Nubot model and the model presented in this work. Another very intriguing problem is to give a proof, or strong experimental evidence, of Conjecture \ref{conj:count-impossibility}. If true, it would imply that there is no analogue of Theorem \ref{the:count-half} if all processes are identical. A possibility left open then would be to achieve high probability counting with $f(n)$ leaders. There is also work to be done w.r.t. analyzing the running times of our protocols and our generic constructors and proposing more efficient solutions. Also it is not yet clear whether the protocol of Section \ref{subsec:leader-counting} is the fastest possible nor that its success probability or the upper bound on $n$ that it guarantees cannot be improved; a proof would be useful. Moreover, it is not obvious what is the class of shapes and patterns that the TMs considered here compute. Of course, it was sufficient as a first step to draw the analogy to such TMs because it helped us establish that our model is quite powerful. However, still we would like to have a characterization that gives some more insight to the actual shapes and patterns that the model can construct.

It would be also important to develop models (e.g. variations of the one proposed here) that take other real physical considerations into account. In this work, we have restricted attention on some geometrical constraints. Other properties of interest could be weight, mass, strength of bonds, rigid and elastic structure, collisions, and the interplay of these with the interaction pattern and the protocol. Moreover, in real applications mere shape construction will not be sufficient. Typically, we will desire to output a shape/structure that \emph{optimizes some global property}, like energy and strength, or that achieves a \emph{desired behavior in the given physical environment}. The latter also indicates that the construction and the environment that the construction inhabits cannot be studied in isolation. Instead, the two will constantly affect each other, the optimal output will highly depend on the optimality that the environment allows and also the environment may highly and continuously affect the construction process. The capability of the environment to affect the construction process suggests many robustness issues. Imagine an environment that can at any given time break an active link with some (small) probability (a similar question was also posed to the author during his talk at PODC '14 by some attendee, which the author would like to thank). Under such a perpetual setback no construction can ever stabilize. However, we may still be able to have a construction that constantly exists in the population by evolving and self-replicating.

Finally, in the same spirit, it would be interesting to develop routines that can rapidly reconstruct broken parts. For example, imagine that a shape has stabilized but a part of it detaches, all the connections of the part become deactivated, and all its nodes become free. Can we detect and reconstruct the broken part efficiently (and without resetting the whole population and repeating the construction from the beginning)? What knowledge about the whole shape should the nodes have to be able to reconstruct missing parts of it?\\

\noindent \textbf{Acknowledgements.} The author would like to thank Dimitrios Amaxilatis and Marios Logaras for implementing (in Java) the probabilistic counting protocol of Section \ref{subsec:leader-counting} and experimentally verifying its correctness and also David Doty for a few fruitful discussions on the same protocol at the very early stages of this work.

\newpage

\newcommand{\etalchar}[1]{$^{#1}$}

\end{document}